\newtheorem{theorem}{Theorem}[section]
\newtheorem{definition}[theorem]{Definition}
\newtheorem{lemma}[theorem]{Lemma}
\newtheorem{corollary}[theorem]{Corollary}
\newtheorem{remark}[theorem]{Remark}
\newcommand{\qedsymb}{\hfill{\rule{2mm}{2mm}}}
\renewenvironment{proof}[1][]{\begin{trivlist}
\item[\hspace{\labelsep}{\bf\noindent Proof#1:\/}] }{\qedsymb\end{trivlist}}
\def\calC{{\cal C}}
\def\calG{{\cal G}}
\def\calC{{\cal C}}
\def\R{\mathbb{R}}
\def\N{\mathbb{N}}
\newcommand{\DRamsey}{\mathsf{DRamsey}}
\newcommand{\Ramsey}{\mathsf{Ramsey}}
\newcommand{\MAIS}{\mathrm{MAIS}}
\newcommand{\MES}{\mathrm{MES}}
\newcommand{\CC}{\mathrm{CC}}
\newcommand{\NP}{\mathsf{NP}}
\newcommand{\eps}{\epsilon}
\renewcommand{\epsilon}{\varepsilon}
\newcommand{\Fset}{\mathbb{F}}         % The integers
\begin{document}

\title{{\bf Improved Approximation Algorithms for Index Coding}}
\author{
Dror Chawin\thanks{School of Computer Science, The Academic College of Tel Aviv-Yaffo, Tel Aviv 61083, Israel. Research supported by the Israel Science Foundation (grant No.~1218/20).}
\and
Ishay Haviv\footnotemark[1]
}

\date{}

\maketitle

\begin{abstract}
The index coding problem is concerned with broadcasting encoded information to a collection of receivers in a way that enables each receiver to discover its required data based on its side information, which comprises the data required by some of the others.
Given the side information map, represented by a graph in the symmetric case and by a digraph otherwise, the goal is to devise a coding scheme of minimum broadcast length.

We present a general method for developing efficient algorithms for approximating the index coding rate for prescribed families of instances.
As applications, we obtain polynomial-time algorithms that approximate the index coding rate of graphs and digraphs on $n$ vertices to within factors of $O(n/\log^2 n)$ and $O(n/\log n)$ respectively. This improves on the approximation factors of $O(n/\log n)$ for graphs and $O(n \cdot \log \log n/\log n)$ for digraphs achieved by Blasiak, Kleinberg, and Lubetzky (IEEE Trans. Inform. Theory,~2013).
For the family of quasi-line graphs, we exhibit a polynomial-time algorithm that approximates the index coding rate to within a factor of $2$. This improves on the approximation factor of $O(n^{2/3})$ achieved by Arbabjolfaei and Kim (ISIT,~2016) for graphs on $n$ vertices taken from certain sub-families of quasi-line graphs.

Our approach is applicable for approximating a variety of additional graph and digraph quantities to within the same approximation factors.
Specifically, it captures every graph quantity sandwiched between the independence number and the clique cover number and every digraph quantity sandwiched between the maximum size of an acyclic induced sub-digraph and the directed clique cover number.
\end{abstract}

\section{Introduction}

The index coding problem, introduced in 1998 by Birk and Kol~\cite{BirkKol98}, is a central problem in network information theory, motivated by various applications ranging from satellite and wireless communications to network coding and distributed storage (see, e.g.,~\cite{ArbabjolfaeiK18}). The problem involves a sender that holds an $n$-symbol message $x \in \Sigma^n$ over some alphabet $\Sigma$ and $n$ receivers $R_1, \ldots, R_n$, where the $i$th receiver $R_i$ is interested in the $i$th symbol $x_i$. Every receiver $R_i$ is given as side information the restriction $x_{N(i)}$ of $x$ to the indices that lie in some set $N(i) \subseteq [n] \setminus \{i\}$. The side information map is represented by a digraph $G$ on the vertex set $[n]$ that includes a directed edge $(i,j)$ if $j \in N(i)$, that is, if the receiver $R_i$ knows $x_j$. The sender wishes to broadcast to the receivers a short encoded message that enables each receiver $R_i$ to discover $x_i$ based on the transmitted message and on the given side information.

An index code for a digraph $G$ on the vertex set $[n]$ over an alphabet $\Sigma$ is formally defined as an encoding function $E:\Sigma^n \rightarrow \Sigma_E$ for some alphabet $\Sigma_E$, such that for each $i \in [n]$, there exists a decoding function $g_i:\Sigma_E \times \Sigma^{|N(i)|} \rightarrow \Sigma$ satisfying $g_i(E(x),x_{N(i)}) = x_i$ for all $x \in \Sigma^n$. Note that the encoding length in bits of such an index code is $\lceil \log_2 |\Sigma_E| \rceil$.
For a side information digraph $G$ and for an integer $t \geq 1$, let $\beta_t(G)$ denote the smallest possible encoding length in bits of an index code for $G$ over an alphabet $\Sigma$ of size $|\Sigma| = 2^t$.
The index coding rate of $G$, denoted by $\beta(G)$, expresses the average asymptotic number of encoding bits needed per bit in each input block for the side information digraph $G$.
It is therefore defined by $\beta(G) = \lim_{t \rightarrow \infty}{\frac{\beta_t(G)}{t}}$, where the existence of the limit follows by combining the sub-additivity of $\beta_t$ with Fekete's lemma (see, e.g.,~\cite[Lemma 11.6]{BookFekete}).
The quantities $\beta_t$ and $\beta$ are naturally extended to (undirected) graphs by replacing every undirected edge by two oppositely directed edges.

A simple upper bound on the index coding rate of a digraph is its (directed) clique cover number. Here, a clique of a digraph $G=(V,E)$ is a set $C \subseteq V$ such that every two distinct vertices of $C$ are connected by two oppositely directed edges. For a digraph $G$ and a clique cover $\calC$ of $G$, consider the encoding function over $\Sigma=\{0,1\}$ that for every clique $C \in \calC$ includes the xor of the bits associated with $C$. This encoding function forms an index code for $G$, because each receiver is able to discover its bit given its side information and the xor associated with a clique that covers its vertex. We refer to such an index code as a clique cover index code. It follows that every digraph $G$ satisfies $\beta(G) \leq \beta_1(G) \leq \CC(G)$, where $\CC(G)$ stands for the minimum size of a clique cover of $G$. This upper bound on the index coding rate has multiple stronger variants, e.g., a local relaxation and a fractional relaxation based on linear programming (see, e.g.,~\cite[Section~6]{ArbabjolfaeiK18}).
On the other hand, the index coding rate $\beta(G)$ of a digraph $G$ is known to satisfy $\beta(G) \geq \MAIS(G)$, where $\MAIS(G)$ stands for the maximum size of an acyclic induced sub-digraph of $G$ (see~\cite{BBJK06,BlasiakKL13}). Note that in the undirected setting, the above bounds can be written as $\alpha(G) \leq \beta(G) \leq \overline{\chi}(G)$, where $\alpha(G)$ and $\overline{\chi}(G)$ denote, respectively, the independence and clique cover numbers of a graph $G$.

From a computational point of view, the difficulty of determining the value of $\beta(G)$ for a given graph or digraph $G$ has motivated researchers to suggest efficient approximation algorithms for the problem. A result of Blasiak, Kleinberg, and Lubetzky~\cite{BlasiakKL13} implies that there exists a polynomial-time algorithm that given a digraph $G$ on $n$ vertices approximates the value of $\beta(G)$ to within a factor of $O(n \cdot \frac{\log \log n}{\log n})$, improving on the trivial factor of $n$. It was further noted in~\cite[Remark~V.2]{BlasiakKL13} that for graphs on $n$ vertices, it is possible to achieve in polynomial time a better approximation factor of $O(\frac{n}{\log n})$.
The proofs in~\cite{BlasiakKL13} further imply that the fractional clique cover number of a digraph on $n$ vertices approximates the index coding rate to within a factor of $O(n \cdot \frac{\log \log n}{\log n})$, and that the clique cover number of a graph on $n$ vertices approximates the index coding rate to within a factor of $O(\frac{n}{\log n})$ (see also~\cite[Propositions~9.1 and~9.3]{ArbabjolfaeiK18}).
The challenge of improving these approximation factors in the directed and undirected settings was explicitly raised in~\cite[Open Problem~9.1]{ArbabjolfaeiK18}.
By combining an approach of~\cite{BlasiakKL13} with results from Ramsey theory~\cite{BelmonteHHRS14}, Arbabjolfaei and Kim~\cite{ArbabjolfaeiK16} showed that for certain families of graphs, the clique cover number approximates the index coding rate to within tighter factors. In particular, they obtained a factor of $O(n^{2/3})$ for graphs on $n$ vertices from the families of line graphs and fuzzy circular interval graphs (see, e.g.,~\cite{ClawFree05}).

\subsection{Our Contribution}

In this paper, we offer a general method for developing efficient algorithms for approximating the index coding rate for prescribed families of instances.
To do so, we consider the algorithmic problem, which might be of independent interest, of finding economical clique covers of digraphs, where the approximation guarantee is measured with respect to the maximum size of an acyclic induced sub-digraph of the digraph at hand.
This is in contrast to the standard setting, where the size of the produced clique cover is measured with respect to the digraph's clique cover number.
Our results on this problem yield efficient approximation algorithms for a variety of graph and digraph quantities.
In what follows, we present those results and describe their consequences for the index coding problem.

\subsubsection{Clique cover vs. Acyclic induced sub-digraph}

A family of graphs or digraphs is called hereditary if it is closed under removal of vertices (but not necessarily under removal of edges).
Consider the problem that given a digraph $G$ on $n$ vertices, taken from a prescribed hereditary digraph family $\calG$, asks to find a clique cover of $G$ whose size does not exceed $\gamma \cdot \MAIS(G)$ for as small as possible $\gamma = \gamma(n)$.
In the present paper, we exhibit a general method for tackling this problem, from both the existential and algorithmic perspectives, where the guaranteed factor $\gamma$ depends on the Ramsey numbers of the family $\calG$ (see Definition~\ref{def:Ramsey}).
Note that under this framework, one can also consider families of (undirected) graphs, represented as digraphs in which adjacent vertices are connected by two oppositely directed edges. In this representation, the maximum size of an acyclic induced sub-digraph is simply the independence number of the corresponding graph.

Before describing our general method, let us mention a well-understood special case of the above problem, where $\calG$ is the family of all graphs.
Erd{\H{o}}s~\cite{Erdos67} proved in 1967 that for every graph $G$ on $n$ vertices, it holds that $\overline{\chi}(G) \leq O(\frac{n}{\log^2 n}) \cdot \alpha(G)$.
The bound is optimal up to a multiplicative constant because, as is well known, a random graph $G$ on $n$ vertices satisfies $\alpha(G) = \Theta(\log n)$ and $\overline{\chi}(G) = \Theta(\frac{n}{\log n})$ (see, e.g.,~\cite[Chapter~10]{AlonS16}).
In the early nineties, Boppana and Halld{\'{o}}rsson~\cite{BoppanaH92} proved the following algorithmic form of Erd{\H{o}}s's result.

\begin{theorem}[{\cite[Theorem~3]{BoppanaH92}}]\label{thm:BH92}
There exists a polynomial-time algorithm that given a graph $G$ on $n$ vertices finds a clique cover of $G$ of size $O(\frac{n}{\log^2 n}) \cdot \alpha(G)$.
\end{theorem}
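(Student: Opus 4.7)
The plan is to algorithmically realize Erd\H{o}s's non-constructive bound $\overline{\chi}(G) \leq O(n/\log^2 n) \cdot \alpha(G)$ via iterated Ramsey extractions. The workhorse is a subroutine that, given a graph $H$ with $|V(H)| \geq \binom{s+t-2}{s-1}$ and parameters $s, t \geq 1$, returns in polynomial time either a clique of size $s$ or an independent set of size $t$ in $H$. It mirrors the standard recursive proof of $R(s,t) \leq \binom{s+t-2}{s-1}$: fix any $v \in V(H)$ and partition $V(H) \setminus \{v\}$ into the neighborhood $A$ and the non-neighborhood $B$ of $v$; by Pascal's identity $\binom{s+t-2}{s-1} = \binom{s+t-3}{s-2} + \binom{s+t-3}{s-1}$, either $|A| \geq \binom{s+t-3}{s-2}$ or $|B| \geq \binom{s+t-3}{s-1}$, so one recurses on the appropriate side with parameters $(s-1,t)$ or $(s,t-1)$, prepending $v$ to the clique returned in the first case or to the independent set returned in the second. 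The recursion terminates at $s=1$ or $t=1$ in polynomial total time.

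The top-level algorithm enumerates guesses $k \in \{1,\ldots,n\}$ for $\alpha(G)$ and, for each $k$, builds a candidate cover $\calC_k$ by iteratively applying the subroutine to the residual graph $G_i$ on $m_i$ vertices with parameters $(s, k+1)$, where $s$ is chosen as the largest integer satisfying $\binom{s+k-1}{s-1} \leq m_i/2$. A returned clique is added to $\calC_k$ and removed from $G_i$; a returned independent set of size $k+1$ is a certificate that $\alpha(G) > k$ and causes the guess $k$ to be abandoned. Once $m_i$ drops below a fixed constant, the remaining vertices are added to $\calC_k$ as singletons, and the algorithm outputs the smallest cover produced across all guesses.

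For the correct guess $k = \alpha(G)$, the subroutine can never return an independent set of size $k+1$, so every iteration contributes exactly one clique of size $s$. Plugging the estimate $\binom{s+k-1}{s-1} \leq \bigl(e(s+k-1)/(k-1)\bigr)^{k-1}$ into the constraint $\binom{s+k-1}{s-1} \leq m_i/2$ and optimizing over $s$ shows that the sum of $1/s$ across iterations is $O(k/\log^2 n)$, yielding $|\calC_k| = O(n \alpha(G)/\log^2 n)$ as required. The main obstacle is the equality regime $\alpha(G) = \Theta(\log n)$, realized tightly by random graphs $G(n,1/2)$: here the extracted cliques have size only $\Theta(\log n)$, and the analysis must carefully track both the additive shrinkage of $m_i$ by $\Theta(\log n)$ per iteration, so that $\log m_i$ stays within a constant factor of $\log n$ throughout all $\Theta(n/\log n)$ iterations, and the precise behavior of the binomial coefficient to avoid losing logarithmic factors in the final accounting.
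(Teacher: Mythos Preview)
Your approach is correct and shares the core idea with the paper: iterate the constructive Erd\H{o}s--Szekeres recursion to peel off cliques, then bound the cover size via Erd\H{o}s's estimate $f_Q(n)=\Omega(\log^2 n)$. The structural difference is that you enumerate guesses $k$ for $\alpha(G)$ and invoke the Ramsey subroutine with explicit target parameters $(s,k+1)$, whereas the paper's subroutine (Lemma~\ref{lemma:RamseyU}) is parameterless and simply returns a pair $(C,I)$ with $Q(|C|+1,|I|+1)>n$; this immediately yields $|C|\ge f_Q(n)/|I|\ge f_Q(n)/\alpha(G)$ and plugs straight into Lemma~\ref{lemma:IterativeCC}, eliminating your outer loop over $k$ entirely. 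Both routes are standard (indeed both appear in Boppana--Halld\'{o}rsson), but the parameterless version is cleaner and also makes the generalization to arbitrary hereditary families and to digraphs (Theorems~\ref{thm:Intro_CC_general} and~\ref{thm:CC_general_Undir}) transparent.

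One caveat in your write-up: the sentence ``the sum of $1/s$ across iterations is $O(k/\log^2 n)$'' is not the right intermediate claim and does not lead to the stated conclusion. What you actually need is that each extracted clique has size $s_i\ge c\log^2 m_i/k$ (which follows from maximality of $s$ together with the Erd\H{o}s bound on $f_Q$), and then
\[
|\calC_k|=\sum_i 1 \le \sum_i s_i\cdot\frac{k}{c\log^2 m_i}\le k\sum_{p=1}^{n}\frac{1}{c\log^2 p}=O\!\left(\frac{kn}{\log^2 n}\right),
\]
exactly as in the proof of Lemma~\ref{lemma:IterativeCC}. With this accounting the concern in your last paragraph about keeping $\log m_i$ within a constant factor of $\log n$ becomes unnecessary: the harmonic-type sum $\sum_{p}1/\log^2 p$ already absorbs all residual sizes, including the tail where $m_i$ is small.
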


To state our general method, we need the following definition of Ramsey numbers, which extends a definition of Belmonte, Heggernes, van~'t Hof, Rafiey, and Saei~\cite{BelmonteHHRS14} to the directed setting.
\begin{definition}[Ramsey Numbers]\label{def:Ramsey}
For a graph family $\calG$ and for two integers $s,t \geq 1$, let $R_\calG(s,t)$ denote the smallest integer $r$ such that every member of $\calG$ on at least $r$ vertices has a clique of size $s$ or an independent set of size $t$.
Similarly, for a digraph family $\calG$ and for two integers $s,t \geq 1$, let $\vec{R}_\calG(s,t)$ denote the smallest integer $r$ such that every member of $\calG$ on at least $r$ vertices has a clique of size $s$ or an acyclic induced sub-digraph of size $t$.
\end{definition}
\noindent
Throughout the paper, for a digraph $G$ and for a set of vertices $I$, we let $G[I]$ denote the sub-digraph of $G$ induced by $I$. The same notation is used for graphs.

Our method is described by the following theorem, which generalizes the approach of~\cite{BoppanaH92}.

\begin{theorem}\label{thm:Intro_CC_general}
Let $\calG$ be a hereditary family of digraphs, and let $Q: \N \times \N \rightarrow \N$ be a function such that $\vec{R}_\calG(s,t) \leq Q(s,t)$ for all integers $s,t \geq 1$.
Let $f_Q: \N \rightarrow \N$ denote the function defined by
\[ f_Q(n) = \min_{s,t \in \N} ~\{ s \cdot t \mid Q(s+1,t+1) > n \}.\]
Then for every digraph $G \in \calG$ on $n$ vertices, it holds that
\begin{eqnarray}\label{eq:Introbound_CC_Ramsey}
\CC(G) \leq \Big (  \sum_{i=1}^{n}{\frac{1}{f_Q(i)}} \Big ) \cdot \MAIS(G).
\end{eqnarray}
Suppose further that there exists a polynomial-time algorithm that given a digraph $G \in \calG$ on $n$ vertices finds a clique $C$ of $G$ and a set $I$ for which $G[I]$ is acyclic, such that for all integers $s,t \geq 1$, if $n \geq Q(s,t)$ then either $|C| \geq s$ or $|I| \geq t$. Then there exists a polynomial-time algorithm that given a digraph $G \in \calG$ on $n$ vertices finds a clique cover of $G$ whose size does not exceed the bound in~\eqref{eq:Introbound_CC_Ramsey}.
\end{theorem}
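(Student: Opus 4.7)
The plan is to mimic the iterative clique-cover construction of Boppana and Halld\'orsson~\cite{BoppanaH92}, adapted from the undirected to the digraph setting. Starting from $G_1 = G$, at iteration $j$ I produce a clique $C_j$ and a vertex set $I_j$ with $G_j[I_j]$ acyclic inside the current residual digraph $G_j \in \calG$, append $C_j$ to the output cover, and pass to $G_{j+1} := G_j[V(G_j) \setminus C_j]$. Since $\calG$ is hereditary, each $G_j$ lies in $\calG$ and the process can be iterated until the vertex set is exhausted; the resulting collection $\{C_j\}$ is then a clique cover of $G$, and its size $J$ is what I must bound.

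For the existential claim I take $C_j$ and $I_j$ to be a maximum clique and a maximum acyclic induced vertex set in $G_j$. Writing $n_j = |V(G_j)|$, $c_j = |C_j|$, $i_j = |I_j|$, the digraph $G_j$ itself witnesses $\vec{R}_\calG(c_j+1, i_j+1) > n_j$, so the hypothesis $\vec{R}_\calG \leq Q$ forces $Q(c_j+1, i_j+1) > n_j$; by the definition of $f_Q$ this gives $c_j \cdot i_j \geq f_Q(n_j)$. For the algorithmic claim I replace the two maxima by a single call to the assumed polynomial-time subroutine on $G_j$; applying its contrapositive at $(s,t) = (c_j+1, i_j+1)$ yields exactly the same inequality $c_j \cdot i_j \geq f_Q(n_j)$, and since each iteration strips at least one vertex, the overall running time is polynomial in $n$.

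To turn these iterate-wise inequalities into the stated global bound I would use two monotonicity observations. First, $\MAIS$ is monotone under induced sub-digraphs, so $i_j \leq \MAIS(G_j) \leq \MAIS(G)$, hence $c_j \geq f_Q(n_j)/\MAIS(G)$. Second, $f_Q$ is non-decreasing in $n$, because enlarging $n$ can only shrink the feasible set $\{(s,t) \mid Q(s+1,t+1) > n\}$ over which the minimum is taken. Combining these with $\sum_j c_j = n$ and a telescoping-style accounting yields
\[ J \;=\; \sum_{j=1}^J 1 \;\leq\; \MAIS(G) \sum_{j=1}^J \frac{c_j}{f_Q(n_j)} \;=\; \MAIS(G) \sum_{j=1}^J \sum_{i=n_{j+1}+1}^{n_j} \frac{1}{f_Q(n_j)} \;\leq\; \MAIS(G) \sum_{i=1}^n \frac{1}{f_Q(i)}, \]
which is the desired bound~\eqref{eq:Introbound_CC_Ramsey}. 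The one genuinely non-mechanical step is the dualization: translating the Ramsey-style hypothesis ``if $n$ is large then $C$ or $I$ is large'' into the pointwise product inequality $c_j \cdot i_j \geq f_Q(n_j)$ for every $j$. Once this is in place, everything else is bookkeeping via the two monotonicities above.
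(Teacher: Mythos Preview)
Your proposal is correct and follows essentially the same route as the paper: the paper isolates the iterative ``peel off a clique of size at least $f_Q(n_j)/\MAIS(G)$ and telescope'' argument into a separate lemma (Lemma~\ref{lemma:IterativeCC}) and then proves Theorem~\ref{thm:Intro_CC_general} by showing that the Ramsey hypothesis (respectively, the subroutine's contrapositive at $(|C|+1,|I|+1)$) yields $Q(|C|+1,|I|+1)>n$ and hence $|C|\cdot|I|\geq f_Q(n)$, exactly as you do. The only difference is packaging---you inline the lemma---and the two monotonicity observations and the telescoping sum match the paper's line by line.
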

\noindent
Note that if one is interested only in the existential statement of the theorem, given in~\eqref{eq:Introbound_CC_Ramsey}, then the function $Q$ can be taken to be equal to $\vec{R}_\calG$. However, to obtain the algorithmic statement of the theorem, it might be needed to consider some function $Q$ with larger values.

We demonstrate the usefulness of Theorem~\ref{thm:Intro_CC_general} by applying it to several families of graphs and digraphs.
Let us first mention that the aforementioned results of~\cite{Erdos67} and~\cite{BoppanaH92} (Theorem~\ref{thm:BH92}) can be derived by applying Theorem~\ref{thm:Intro_CC_general} to the family of all graphs, using a classical argument from Ramsey theory due to Erd{\H{o}}s and Szekeres~\cite{ErdosS35}.
For an analogue result for digraphs, we combine Theorem~\ref{thm:Intro_CC_general} with a directed variant of the argument of~\cite{ErdosS35} and prove the following.

\begin{theorem}\label{thm:IntroDirectedGen}
There exists a polynomial-time algorithm that given a digraph $G$ on $n$ vertices finds a clique cover of $G$ of size $O(\frac{n}{\log n}) \cdot \MAIS(G)$.
\end{theorem}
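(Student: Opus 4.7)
The plan is to invoke Theorem~\ref{thm:Intro_CC_general} with $\calG$ the hereditary family of all digraphs and with a function $Q$ derived from a directed analogue of the Erd{\H{o}}s--Szekeres argument~\cite{ErdosS35}. I would set $Q(s,t) = \binom{s + 2^{t-1} - 2}{s-1}$ and verify $\vec{R}_\calG(s,t) \leq Q(s,t)$ as follows. Given $G$ on $n \geq Q(s,t)$ vertices, form the undirected graph $H$ on $V(G)$ whose edges are exactly the pairs $\{u,v\}$ with both $(u,v),(v,u) \in E(G)$. The classical Erd{\H{o}}s--Szekeres bound produces in $H$ either a clique of size $s$ (which is also a clique of $G$) or an independent set $I$ of size $2^{t-1}$. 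In the latter case $G[I]$ is an \emph{oriented graph}---at most one directed edge between each pair---and a standard greedy argument inside $G[I]$ yields a subset $T \subseteq I$ with $|T|=t$ for which $G[T]$ admits a topological ordering, hence is acyclic.

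Both of the above steps are constructive, which provides the algorithmic hypothesis of Theorem~\ref{thm:Intro_CC_general}. The recursive proof of the Erd{\H{o}}s--Szekeres inequality translates directly into a polynomial-time algorithm that outputs the promised clique or independent set in $H$. Inside an oriented graph on $r$ vertices, one builds an acyclic induced sub-digraph on $\lfloor \log_2 r \rfloor + 1$ vertices greedily: pick any vertex $v$; the remaining $r-1$ vertices split into $v$'s out-neighbors $N^+(v)$, in-neighbors $N^-(v)$, and a set $M$ of non-neighbors, so the larger of $N^+(v) \cup M$ and $N^-(v) \cup M$ has size at least $(r-1)/2$. Recurse on it and place $v$ first or last in the resulting topological order accordingly.

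It remains to estimate $\sum_{i=1}^n 1/f_Q(i)$, for which it suffices to show $f_Q(i) = \Theta(\log i)$. The upper bound $f_Q(n) \leq \lceil \log_2 n \rceil + 1$ is witnessed by $s=1$, $t = \lceil \log_2 n \rceil + 1$, since $Q(2, t+1) = 2^t > n$. For the matching lower bound, any pair $(s,t)$ with $Q(s+1, t+1) = \binom{s + 2^t - 1}{s} > n$ satisfies $(e(s + 2^t - 1)/s)^s > n$; separating the cases $s \geq 2^t$ and $s < 2^t$ and taking logarithms yields $s \cdot t = \Omega(\log n)$ in both. I expect this two-sided estimate on $f_Q$ to be the main obstacle, as it requires balancing the two parameters inside the Erd{\H{o}}s--Szekeres binomial. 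Granted it, $\sum_{i=1}^n 1/f_Q(i) = O(n/\log n)$, and Theorem~\ref{thm:Intro_CC_general} supplies both the existential bound $\CC(G) \leq O(n/\log n) \cdot \MAIS(G)$ and the desired polynomial-time algorithm.
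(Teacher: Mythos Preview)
Your proposal is correct and reaches the same conclusion as the paper, but via a genuinely different decomposition. The paper proves a single directed Ramsey lemma (Lemma~\ref{lemma:RamseyD}) by a three-way recursion: around a pivot vertex $u$ it splits into the bidirectional neighborhood $V_1$, the in-only neighborhood $V_2$, and the remainder $V_3$, obtaining the recurrence $\vec{P}(s,t)\le \vec{P}(s-1,t)+2\vec{P}(s,t-1)-2$ and hence $Q(s,t)=\binom{s+t-2}{s-1}\cdot 2^{t-1}$. Your argument is two-stage and modular: first run undirected Erd{\H{o}}s--Szekeres on the graph of bidirectional edges to obtain a clique or a large oriented induced piece, then run a separate halving recursion inside that oriented piece to extract the acyclic set, yielding $Q(s,t)=\binom{s+2^{t-1}-2}{s-1}$. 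The two $Q$ functions are incomparable in general, but both give $f_Q(n)=\Theta(\log n)$, which is all that matters. The paper's route makes the final estimate on $f_Q$ slightly cleaner (from $Q(s+1,t+1)\le 2^{s+2t}$ one reads off $f_Q(n)>\lfloor(\log_2 n)/3\rfloor$ in one line), whereas your route has the advantage of reusing the undirected Ramsey algorithm as a black box and isolating the ``directed'' content in the simple oriented-graph halving step. Two minor remarks: only the lower bound $f_Q(n)=\Omega(\log n)$ is needed for the theorem, so the upper bound you mention is unnecessary; and your case analysis for that lower bound is fine, though you might tighten the write-up by noting that $s(t\log 2+\log(2e))\ge s(t\log 2+\log(2e)-\log s)>\log n$ already forces $s\cdot t=\Omega(\log n)$ once you split on whether $s$ or $t$ dominates.
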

\noindent
The guarantee of the algorithm given by Theorem~\ref{thm:IntroDirectedGen} is optimal up to a multiplicative constant, because there exist digraphs on $n$ vertices whose ratio between the clique cover number and the maximum size of an acyclic induced sub-digraph is $\Theta(\frac{n}{\log n})$. This indeed follows from a result of Erd{\H{o}}s and Moser~\cite{ErdosMoser64}, who proved that there exist tournaments $G$ on $n$ vertices, i.e., digraphs with exactly one edge connecting every pair of vertices, satisfying $\MAIS(G) \leq 2 \cdot \lfloor \log_2 n \rfloor +1$. Since a tournament has no clique of size larger than $1$, those digraphs $G$ satisfy $\CC(G)=n$, and thus attain the claimed ratio.

We further apply Theorem~\ref{thm:Intro_CC_general} to hereditary digraph families $\calG$ whose Ramsey numbers $\vec{R}_\calG(s,t)$ grow polynomially with $s$ and $t$ (see Theorem~\ref{thm:CC_Ramsey}).
As an application, we improve the upper bounds of~\cite{ArbabjolfaeiK16} on the ratio between the clique cover and independence numbers for hereditary graph families with polynomially bounded Ramsey numbers (see Corollary~\ref{cor:CC_Ramsey_U} and the discussion that follows it).
In particular, for hereditary graph families $\calG$ satisfying $R_\calG(s,t) = O(s \cdot t)$, we show that every graph $G \in \calG$ on $n$ vertices satisfies $\overline{\chi}(G) \leq O(\log n) \cdot \alpha(G)$.
This particular setting is motivated by a paper of Belmonte et al.~\cite{BelmonteHHRS14}, who proved that $R_\calG(s,t) = O(s \cdot t)$ for $\calG$ being either the family of line graphs or the family of fuzzy circular interval graphs (see, e.g.,~\cite{ClawFree05}).

In fact, for these two aforementioned graph families we improve on the logarithmic multiplicative term deduced from Theorem~\ref{thm:Intro_CC_general} and obtain a much stronger bound along with an algorithmic result.
To do so, we consider the family of quasi-line graphs, those graphs where the neighborhood of every vertex can be partitioned into two cliques.
This family of graphs, which plays a central role in the structural characterization of claw-free graphs due to Chudnovsky and Seymour~\cite{ClawFree05}, contains the family of line graphs as well as the family of fuzzy circular interval graphs.
We prove the following theorem.

\begin{theorem}\label{thm:IntroQuasi}
There exists a polynomial-time algorithm that given a quasi-line graph $G$ finds a clique cover of $G$ of size at most $2 \cdot \alpha(G)$.
\end{theorem}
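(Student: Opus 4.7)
The plan is to exploit the local quasi-line structure at the vertices of any \emph{maximal} (not necessarily maximum) independent set. First, I would compute a maximal independent set $I \subseteq V(G)$ greedily in polynomial time. By maximality, every vertex $v \in V(G) \setminus I$ has at least one neighbor in $I$, so I can fix a map $f : V(G) \setminus I \to I$ with $f(v) \in N(v) \cap I$ and set $T_u = f^{-1}(u) \subseteq N(u)$ for each $u \in I$; the sets $\{T_u\}_{u \in I}$ then partition $V(G) \setminus I$.

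Next, for each $u \in I$, I would compute a partition of $N(u)$ into two cliques $A_u, B_u$ of $G$. Such a partition exists because $G$ is quasi-line, and it can be produced in polynomial time by checking that the complement of $G[N(u)]$ is bipartite and returning the two sides of its bipartition. The proposed clique cover is then, for each $u \in I$, the pair of cliques
\[
C_u^{(1)} = \{u\} \cup (T_u \cap A_u) \qquad \text{and} \qquad C_u^{(2)} = T_u \cap B_u.
\]
Here $C_u^{(1)}$ is a clique because $A_u$ is a clique in $G$ and $u$ is adjacent to every vertex of $N(u) \supseteq A_u$, while $C_u^{(2)}$ is a clique as a subset of $B_u$. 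Together these two cliques cover $\{u\} \cup T_u$, so varying $u$ over $I$ yields a clique cover of $V(G)$ consisting of at most $2|I|$ non-empty cliques.

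I do not expect a substantial obstacle here: the quasi-line hypothesis supplies exactly two cliques per neighborhood, which matches the factor $2$ in the conclusion, and all algorithmic ingredients are clearly polynomial in the size of $G$. The only point worth noting is that the target bound is $2\alpha(G)$ while the algorithm computes only a maximal rather than a maximum independent set; this is sufficient because any maximal independent set $I$ satisfies $|I| \leq \alpha(G)$, so we obtain $2|I| \leq 2\alpha(G)$ without needing to appeal to the (also polynomial-time) maximum independent set algorithm of Minty for the broader class of claw-free graphs.
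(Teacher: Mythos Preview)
Your proposal is correct and takes essentially the same approach as the paper: both arguments select an independent set of ``center'' vertices whose closed neighborhoods cover $V(G)$, use the quasi-line property to split each such neighborhood into two cliques, and bound the count by $2|I| \leq 2\alpha(G)$. The only difference is organizational---you compute a maximal independent set up front and then assign the remaining vertices to it, whereas the paper interleaves the two steps by repeatedly picking a vertex, covering and deleting its closed neighborhood, and observing that the chosen vertices are pairwise non-adjacent.
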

\noindent
The guarantee of the algorithm given by Theorem~\ref{thm:IntroQuasi} is essentially optimal, even for line graphs, because the ratio between the clique cover and independence numbers of such graphs can be arbitrarily close to $2$ when the number of vertices grows (see Remark~\ref{remark:2tight}).
Since every graph $G$ satisfies $\alpha(G) \leq \overline{\chi}(G)$, Theorem~\ref{thm:IntroQuasi} implies that it is possible to approximate the clique cover number of quasi-line graphs to within a factor of $2$ in polynomial time. This extends a recent result on line graphs due to Daneshpajouh, Meunier, and Mizrahi~\cite{DaneshpajouhMM21}.

\subsubsection{Approximation algorithms for Index Coding}

The previous section offers efficient algorithms that given a digraph $G$ on $n$ vertices, taken from some specified digraph family, find a clique cover of $G$ of size at most $\gamma \cdot \MAIS(G)$ for some $\gamma = \gamma(n)$.
Consequently, these algorithms can be used to approximate to within a factor of $\gamma$ any digraph quantity $\psi$ that satisfies $\MAIS(G) \leq \psi(G) \leq \CC(G)$ for every digraph $G$.
In the undirected setting, those algorithms can be used to approximate any graph quantity $\psi$ that satisfies $\alpha(G) \leq \psi(G) \leq \overline{\chi}(G)$ for every graph $G$.
It therefore follows that the algorithms from the previous section can be used to efficiently approximate the index coding rate $\beta$ of a given graph or digraph. Moreover, the same approximation guarantees are achievable for various additional index coding measures, e.g., the scalar capacity $\beta_1$, the $\beta^\star$ capacity due to Alon et al.~\cite{AlonLSWH08}, the minrank parameter over a field $\Fset$ due to Haemers~\cite{Haemers78} (which characterizes the optimal length of linear index codes over $\Fset$~\cite{BBJK06}), and many more. The approximation also applies to graph and digraph quantities with the above sandwich properties that are not directly related to index coding, such as the Shannon capacity of graphs~\cite{Shannon56} and the Sperner capacity of (complement) digraphs~\cite{KornerS92}.

For concreteness, we state below our results on approximating the index coding rate $\beta$.
We start with the case of general graphs.
As a consequence of the result of Boppana and Halld{\'{o}}rsson~\cite{BoppanaH92}, stated earlier as Theorem~\ref{thm:BH92}, we obtain the following.

\begin{theorem}\label{thm:IntroICundir}
There exists a polynomial-time algorithm that given a graph $G$ on $n$ vertices finds a clique cover index code for $G$ of length $O(\frac{n}{\log^2 n}) \cdot \beta(G)$, and in particular, approximates the value of $\beta(G)$ to within a factor of $O(\frac{n}{\log^2 n})$.
\end{theorem}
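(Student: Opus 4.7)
The plan is to combine the algorithmic bound of Boppana and Halld\'orsson (Theorem~\ref{thm:BH92}) with the sandwich property $\alpha(G) \leq \beta(G) \leq \overline{\chi}(G)$ that was established in the introduction. First, given the input graph $G$ on $n$ vertices, I invoke the algorithm of Theorem~\ref{thm:BH92} to produce in polynomial time a clique cover $\calC$ of $G$ of size
\[
|\calC| \leq O\Big(\tfrac{n}{\log^2 n}\Big) \cdot \alpha(G).
\]

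Next, I convert this clique cover into an explicit clique cover index code, following the construction recalled in the introduction: over the alphabet $\Sigma = \{0,1\}$, transmit for every clique $C \in \calC$ the exclusive-or of the bits $\{x_i\}_{i \in C}$. As explained in the introduction, this is a valid index code for $G$, because any receiver $R_i$ finds its bit $x_i$ within some clique $C \in \calC$ covering $i$, and it can recover $x_i$ by xoring the corresponding transmitted bit with the bits $\{x_j\}_{j \in C \setminus \{i\}}$ that it already knows from its side information. The encoding length in bits of this code equals $|\calC|$.

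Finally, using the lower bound $\alpha(G) \leq \beta(G)$ (the undirected specialization of $\MAIS(G) \leq \beta(G)$ noted in the introduction), we conclude that the produced clique cover index code has length at most
\[
|\calC| \leq O\Big(\tfrac{n}{\log^2 n}\Big) \cdot \alpha(G) \leq O\Big(\tfrac{n}{\log^2 n}\Big) \cdot \beta(G).
\]
Since $\beta(G)$ lower bounds the length of every index code for $G$, this simultaneously certifies that the output code length approximates $\beta(G)$ to within the claimed factor. The only ``obstacle'' is that all the heavy lifting has been outsourced: the existence and polynomial-time computability of an economical clique cover is exactly Theorem~\ref{thm:BH92}, and the two inequalities in the sandwich are stated in the introduction, so the proof reduces to assembling these ingredients.
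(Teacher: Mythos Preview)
Your proof is correct and follows exactly the approach the paper intends: the theorem is stated in the paper explicitly as a consequence of Theorem~\ref{thm:BH92}, and the surrounding text spells out precisely the sandwich argument you use, namely that any algorithm producing a clique cover of size at most $\gamma \cdot \alpha(G)$ immediately yields a clique cover index code whose length is at most $\gamma \cdot \beta(G)$ via $\alpha(G) \leq \beta(G) \leq \overline{\chi}(G)$.
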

\noindent
Note that Theorem~\ref{thm:IntroICundir} improves on the approximation factor reported in~\cite[Remark~V.2]{BlasiakKL13} and in~\cite[Propositions~9.1]{ArbabjolfaeiK18} by a multiplicative term of $\log n$.

We proceed with the case of general digraphs. As a consequence of Theorem~\ref{thm:IntroDirectedGen}, we obtain the following.

\begin{theorem}\label{thm:IntroICdir}
There exists a polynomial-time algorithm that given a digraph $G$ on $n$ vertices finds a clique cover index code for $G$ of length $O(\frac{n}{\log n}) \cdot \beta(G)$, and in particular, approximates the value of $\beta(G)$ to within a factor of $O(\frac{n}{\log n})$.
\end{theorem}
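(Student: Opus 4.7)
The plan is to reduce the index coding approximation to the clique cover vs.\ acyclic induced sub-digraph problem handled by Theorem~\ref{thm:IntroDirectedGen}, and then use the two sandwich inequalities $\MAIS(G) \leq \beta(G) \leq \CC(G)$ that were recalled in the introduction.

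Concretely, given a digraph $G$ on $n$ vertices, I would first invoke the polynomial-time algorithm of Theorem~\ref{thm:IntroDirectedGen} to produce a clique cover $\calC$ of $G$ satisfying $|\calC| \leq O(\tfrac{n}{\log n}) \cdot \MAIS(G)$. From $\calC$ I would then form the associated clique cover index code, namely the encoding over $\Sigma = \{0,1\}$ that transmits, for each clique $C \in \calC$, the xor of the message bits indexed by $C$; as explained in the introduction, this is a valid index code for $G$ of length exactly $|\calC|$. Because $\beta(G) \geq \MAIS(G)$ for every digraph $G$, the length of this code is bounded by $O(\tfrac{n}{\log n}) \cdot \beta(G)$, which is the first assertion of the theorem.

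For the approximation statement, I would output the number $|\calC|$ as the estimate for $\beta(G)$. On the one hand, $|\calC| \geq \beta(G)$, since any clique cover index code has length at least $\beta(G) \leq \beta_1(G) \leq \CC(G)$. On the other hand, we just argued $|\calC| \leq O(\tfrac{n}{\log n}) \cdot \beta(G)$. Thus $|\calC|$ approximates $\beta(G)$ to within a factor of $O(\tfrac{n}{\log n})$, and it is computed in polynomial time.

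There is essentially no obstacle, given Theorem~\ref{thm:IntroDirectedGen}: the argument is a direct sandwich argument combining $\MAIS(G) \leq \beta(G) \leq \CC(G)$ with the clique cover bound guaranteed by that theorem. The only subtlety worth emphasizing is that the lower bound $\beta(G) \geq \MAIS(G)$ must be applied to the \emph{asymptotic} rate $\beta$ (as defined in the introduction), which is precisely the form of the bound stated in~\cite{BBJK06,BlasiakKL13}; with this in hand, the proof is immediate.
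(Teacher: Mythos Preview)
Your proposal is correct and follows exactly the approach the paper takes: it derives Theorem~\ref{thm:IntroICdir} as an immediate consequence of Theorem~\ref{thm:IntroDirectedGen} via the sandwich inequalities $\MAIS(G) \leq \beta(G) \leq \CC(G)$, outputting the clique cover index code associated with the produced cover.
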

\noindent
The algorithm given by Theorem~\ref{thm:IntroICdir} improves on the algorithm of Blasiak et al.~\cite{BlasiakKL13} for digraphs in two respects.
Firstly, the approximation factor is improved by a multiplicative term of $\log \log n$.
Secondly, our algorithm produces an index code that corresponds to a clique cover of the input digraph rather than to a fractional clique cover.
Therefore, the algorithm applies to additional index coding measures such as the scalar capacity $\beta_1$ and the minrank over a field $\Fset$, which are not bounded from above by the fractional clique cover.
On the other hand, the algorithm of~\cite{BlasiakKL13} enjoys the advantage that it extends to a generalized version of the index coding problem, which allows the number of receivers $m$ to be larger than the number of symbols $n$ (see Section~\ref{sec:generalIC}). We extend Theorem~\ref{thm:IntroICdir} to this generalized setting, however, this extension does not preserve the approximation factor. Yet, it produces index codes whose rate is smaller than the rate guaranteed by the algorithm of~\cite{BlasiakKL13} whenever the number of receivers $m$ satisfies $m = o(n \cdot \log \log n)$. For the precise statement, see Theorem~\ref{thm:genIC}.

Finally, we consider the restriction of the index coding problem to the family of quasi-line graphs.
As a consequence of Theorem~\ref{thm:IntroQuasi}, we obtain the following.

\begin{theorem}\label{thm:IntroICquasi}
There exists a polynomial-time algorithm that given a quasi-line graph $G$ finds a clique cover index code for $G$ of length at most $2 \cdot \beta(G)$, and in particular, approximates the value of $\beta(G)$ to within a factor of $2$.
\end{theorem}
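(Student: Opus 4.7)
The plan is to reduce Theorem~\ref{thm:IntroICquasi} to Theorem~\ref{thm:IntroQuasi} via the sandwich property $\alpha(G) \leq \beta(G) \leq \CC(G)$ that holds for every graph $G$ (as recalled in the introduction). Given a quasi-line graph $G$ as input, I would first invoke the polynomial-time algorithm promised by Theorem~\ref{thm:IntroQuasi} to produce a clique cover $\calC$ of $G$ with $|\calC| \leq 2 \cdot \alpha(G)$.

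Next, I would turn $\calC$ into the associated clique cover index code: working over the binary alphabet $\Sigma=\{0,1\}$, for each clique $C \in \calC$ the encoder transmits the xor $\bigoplus_{i \in C} x_i$. As reviewed in the introduction, this is a valid index code for $G$, since any receiver $R_i$ can recover $x_i$ by taking the clique $C \in \calC$ containing $i$ and subtracting from the broadcast bit of $C$ the values $x_j$ for $j \in C \setminus \{i\}$, all of which are available to it as side information (because $C$ is a clique, every such $j$ is a neighbor of $i$). The length of this index code in bits equals $|\calC|$.

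To finish, I would observe that the lower bound $\alpha(G) \leq \beta(G)$, which is the undirected specialization of $\MAIS(G) \leq \beta(G)$ noted in the introduction, yields
\[
|\calC| \;\leq\; 2 \cdot \alpha(G) \;\leq\; 2 \cdot \beta(G),
\]
so the produced clique cover index code has length at most $2\cdot \beta(G)$. Since $\beta(G)$ is itself an obvious lower bound for the encoding length of any index code, this also certifies a factor-$2$ approximation of $\beta(G)$.

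There is essentially no hard step: all the technical work is hidden in Theorem~\ref{thm:IntroQuasi}. The only thing to be careful about is to state the argument with the \emph{length in bits} convention used in the paper, so that the identification ``size of clique cover $=$ length of the associated clique cover index code over $\Sigma=\{0,1\}$'' is applied correctly, and to invoke $\alpha(G)\leq \beta(G)$ rather than the reverse inequality. No further structural property of quasi-line graphs is needed here beyond what Theorem~\ref{thm:IntroQuasi} already provides.
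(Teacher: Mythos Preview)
Your proposal is correct and matches the paper's approach: the paper presents Theorem~\ref{thm:IntroICquasi} explicitly as a consequence of Theorem~\ref{thm:IntroQuasi} via the sandwich inequalities $\alpha(G)\leq\beta(G)\leq\CC(G)$, exactly as you argue. There is no separate proof in the paper beyond this reduction, so your write-up is essentially the intended argument.
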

\noindent
As mentioned earlier, the family of quasi-line graphs contains the family of line graphs and the family of fuzzy circular interval graphs.
Therefore, Theorem~\ref{thm:IntroICquasi} improves, in the existential and algorithmic manners, the approximation factor of $O(n^{2/3})$ achieved in~\cite{ArbabjolfaeiK16} for graphs on $n$ vertices from these families.

It would be intriguing to determine whether the approximation factors guaranteed by Theorems~\ref{thm:IntroICundir},~\ref{thm:IntroICdir}, and~\ref{thm:IntroICquasi} can be improved. A viable strategy for addressing this challenge would be to employ stronger lower and upper bounds on the index coding rate than the ones utilized here (namely, the maximum size of an acyclic induced sub-digraph and the clique cover number).

\subsection{Related Work}

We gather here several algorithmic and hardness results from the literature regarding graph quantities considered in the present paper.
The independence and clique cover numbers of graphs on $n$ vertices can be approximated to within a factor of $O(n \cdot \frac{(\log \log n)^2}{\log^3 n})$ in polynomial time~\cite{Feige04,Halldorsson93}, and it is $\NP$-hard to approximate them to within a factor of $n^{1-\eps}$ for any $\eps>0$~\cite{Zuckerman07}.

For the index coding problem, it was shown in~\cite{LangbergS08} that assuming a variant of the Unique Games Conjecture, it is hard to approximate the value of $\beta_t(G)$ for an input graph $G$ to within any constant factor for any fixed integer $t \geq 1$. It was further shown in~\cite{ChawinH23} that it is $\NP$-hard to approximate the optimal length of a linear index code over any fixed finite field to within any constant factor (see also~\cite{ChawinH22}).

It is worth noting that it is $\NP$-hard to approximate the minimum size of a vertex cover of triangle-free graphs to within a factor of $1.36$~\cite{AwasthiCKS15}.
As observed in~\cite{DaneshpajouhMM21}, hardness for the vertex cover problem on triangle-free graphs implies hardness for the clique cover problem on line graphs.
To see this, recall that the line graph of a graph $G$ is the graph whose vertices are the edges of $G$, with an edge between every two edges of $G$ that share a common vertex. It can be verified that if $G$ is triangle-free, then the minimum size of a vertex cover of $G$ is equal to the minimum size of a clique cover of its line graph. The result of~\cite{AwasthiCKS15} therefore implies that it is $\NP$-hard to approximate the clique cover number of line graphs (and thus of quasi-line graphs) to within a factor of $1.36$.

\subsection{Outline}
The rest of the paper is organized as follows.
In Section~\ref{sec:digraphs}, we present our general method for finding clique covers of digraphs and prove Theorem~\ref{thm:Intro_CC_general}.
We then apply the theorem to several families of digraphs, and in particular, prove Theorem~\ref{thm:IntroICdir}.
We also present the method for graphs and show that it generalizes Theorem~\ref{thm:BH92} of~\cite{BoppanaH92} (with the full details given in Appendix~\ref{app:undirected}).
In Section~\ref{sec:quasi}, we present our algorithm for finding clique covers of quasi-line graphs and prove Theorem~\ref{thm:IntroQuasi}.
Finally, in Section~\ref{sec:generalIC}, we use Theorem~\ref{thm:IntroICdir} to offer an algorithm for a generalized version of the index coding problem.

\section{Clique Covering of Digraphs}\label{sec:digraphs}

In this section, we present our general method for finding clique covers of digraphs, confirm Theorem~\ref{thm:Intro_CC_general}, and demonstrate its applicability.
We start with the following lemma, which produces clique covers of digraphs from a given hereditary family and analyzes their size through a standard argument (see, e.g.,~\cite[Lemma~2]{Halldorsson93}).

\begin{lemma}\label{lemma:IterativeCC}
Let $\calG$ be a hereditary family of digraphs, and let $f: \N \rightarrow \R^+$ be a monotone non-decreasing function.
Suppose that every digraph $G \in \calG$ on $n$ vertices has a clique of size at least $\frac{f(n)}{\MAIS(G)}$.
Then for every digraph $G \in \calG$ on $n$ vertices, it holds that
\begin{eqnarray}\label{eq:bound_CC}
\CC(G) \leq  \Big ( \sum_{i=1}^{n}{\frac{1}{f(i)}} \Big ) \cdot \MAIS(G).
\end{eqnarray}
Moreover, if there exists a polynomial-time algorithm that given a digraph $G \in \calG$ on $n$ vertices finds a clique of $G$ of size at least $\frac{f(n)}{\MAIS(G)}$, then there exists a polynomial-time algorithm that given a digraph $G \in \calG$ on $n$ vertices finds a clique cover of $G$ whose size does not exceed the bound in~\eqref{eq:bound_CC}.
\end{lemma}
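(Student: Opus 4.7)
The plan is to mimic the greedy strategy used for the graph case in Halld\'orsson's argument~\cite{Halldorsson93}: iteratively pull out a clique, remove its vertices, and recurse on the remainder. Hereditariness of $\calG$ and the monotonicity of $\MAIS$ under induced sub-digraphs will guarantee that the hypothesis on $f$ remains applicable throughout, so that the process produces a partition $C_1,\dots,C_N$ of $V(G)$ into cliques, and the task reduces to bounding $N$.

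In detail, set $k := \MAIS(G)$ and let $G_1 := G$. Inductively, apply the hypothesis to $G_i \in \calG$ on $n_i := |V(G_i)|$ vertices to obtain a clique $C_i$ of size
\[
c_i \;\geq\; \frac{f(n_i)}{\MAIS(G_i)} \;\geq\; \frac{f(n_i)}{k},
\]
where the second inequality uses $\MAIS(G_i) \leq \MAIS(G) = k$. Define $G_{i+1} := G_i[V(G_i) \setminus C_i] \in \calG$; heredity justifies the next iteration. Since $c_i \geq 1$, the procedure halts after at most $n$ steps, and $n_{i+1} = n_i - c_i$ with $n_1 = n$.

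The core step is the charging/telescoping bound on $N$. From $c_i \geq f(n_i)/k$ we get $1 \leq k c_i / f(n_i)$, so
\[
N \;=\; \sum_{i=1}^{N} 1 \;\leq\; k \cdot \sum_{i=1}^{N} \frac{c_i}{f(n_i)}.
\]
Monotonicity of $f$ then gives, for each $i$ and each integer $j$ with $n_{i+1} < j \leq n_i$, the inequality $1/f(j) \geq 1/f(n_i)$, hence
\[
\frac{c_i}{f(n_i)} \;\leq\; \sum_{j=n_{i+1}+1}^{n_i} \frac{1}{f(j)}.
\]
Summing over $i$ telescopes the index ranges into $\{1,\dots,n\}$, yielding $N \leq k \cdot \sum_{i=1}^{n} 1/f(i)$, which is exactly~\eqref{eq:bound_CC}. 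The algorithmic statement is then immediate: each iteration invokes the assumed polynomial-time clique-finding routine once on $G_i$, and there are at most $n$ iterations, so the whole procedure runs in polynomial time and returns a clique cover of the stated size.

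I do not expect any serious obstacle here; the argument is essentially bookkeeping. The only point requiring care is the telescoping step, where one must use that $f$ is non-decreasing and that the intervals $(n_{i+1}, n_i]$ form a partition of $\{1,\dots,n\}$. Everything else (heredity of $\calG$, monotonicity of $\MAIS$ under vertex deletion, and the polynomial loop length) is standard.
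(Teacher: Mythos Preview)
Your proposal is correct and follows essentially the same approach as the paper's proof: both run the greedy clique-extraction process, use heredity of $\calG$ together with $\MAIS(G_i)\leq\MAIS(G)$ to keep the hypothesis applicable, and bound the number of iterations via the identical telescoping argument exploiting monotonicity of $f$. The notation differs slightly (you write $N$, $k$, $c_i$ where the paper writes $r$, $\MAIS(G)$, $|C_j|$), but the logic is the same.
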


\begin{proof}
Let $\calG$ be a hereditary family of digraphs, and let $f: \N \rightarrow \R^+$ be a function as in the lemma.
Consider some digraph $G \in \calG$ on $n$ vertices.
We define a clique cover of $G$ by repeatedly choosing a clique of $G$ and removing its vertices.

More specifically, we define a sequence of induced sub-digraphs $G_1, \ldots, G_r$ of $G$ and a sequence of cliques $C_1, \ldots, C_r$ of $G$ as follows.
Set $G_1 = G$.
For each $j \geq 1$, if $G_j$ has at least one vertex, let $n_j$ denote the number of its vertices, and use the fact that $G_j \in \calG$ to obtain that there exists a clique $C_j$ of $G_j$ of size
\begin{eqnarray}\label{eq:lemmaCj}
|C_j| \geq \frac{f(n_j)}{\MAIS(G_j)} \geq \frac{f(n_j)}{\MAIS(G)},
\end{eqnarray}
where the second inequality holds because $G_j$ is an induced sub-digraph of $G$.
We proceed to the next iteration of the process with the digraph $G_{j+1}$ obtained from $G_{j}$ by removing the vertices of $C_{j}$.
Since $\calG$ is hereditary, it holds that $G_{j+1} \in \calG$.
Note that $G_{j+1}$ has $n_j-|C_j|$ vertices.

Letting $r$ denote the number of cliques produced until no vertices remain, the sets $C_1, \ldots, C_r$ form a clique cover of $G$ of size $r$, hence $\CC(G) \leq r$.
Combining~\eqref{eq:lemmaCj} with the assumption that the function $f$ is monotone non-decreasing, we obtain that for each $j \in [r]$, it holds that
\[ 1 = |C_j| \cdot \frac{1}{|C_j|} \leq |C_j| \cdot \frac{1}{f(n_j)} \cdot \MAIS(G) \leq \bigg (\sum_{i=0}^{|C_j|-1}{\frac{1}{f(n_j-i)}} \bigg ) \cdot \MAIS(G).\]
By summing the above over all $j \in [r]$, we obtain that
\[ \CC(G) \leq r  \leq \Bigg ( \sum_{j=1}^{r} { \sum_{i=0}^{|C_j|-1}{\frac{1}{f(n_j-i)}} }\Bigg ) \cdot \MAIS(G) = \Big ( \sum_{i=1}^{n}{\frac{1}{f(i)}} \Big ) \cdot \MAIS(G),\]
where the equality holds because $n_1 = n$, $n_{j+1} = n_j-|C_j|$ for all $j \in [r-2]$, and $n_r = |C_r|$.

Finally, observe that the iterative process described above, whose number of iterations does not exceed the number of vertices of $G$, confirms the algorithmic statement of the lemma and completes the proof.
\end{proof}

Equipped with Lemma~\ref{lemma:IterativeCC}, we are ready to prove Theorem~\ref{thm:Intro_CC_general} (recall Definition~\ref{def:Ramsey}).

\begin{proof}[ of Theorem~\ref{thm:Intro_CC_general}]
Let $\calG$ be a hereditary family of digraphs, and let $Q: \N \times \N \rightarrow \N$ be a function such that $\vec{R}_\calG(s,t) \leq Q(s,t)$ for all integers $s,t \geq 1$.
Recall that $f_Q: \N \rightarrow \N$ is the function defined by
\[ f_Q(n) = \min_{s,t \in \N} ~\{ s \cdot t \mid Q(s+1,t+1) > n \},\]
and notice that it is monotone non-decreasing.

We first claim that every digraph $G \in \calG$ on $n$ vertices has a clique of size at least $\frac{f_Q(n)}{\MAIS(G)}$.
To see this, denote by $\omega(G)$ the maximum size of a clique of $G$, and let us observe that
\begin{eqnarray}\label{eq:Q(,)}
Q(\omega(G)+1, \MAIS(G)+1) > n.
\end{eqnarray}
Indeed, otherwise it would follow that
\[\vec{R}_\calG(\omega(G)+1,\MAIS(G)+1) \leq Q(\omega(G)+1,\MAIS(G)+1) \leq n,\]
which implies that $G$ has a clique of size larger than $\omega(G)$ or an acyclic induced sub-digraph of size larger than $\MAIS(G)$, in contradiction.
Now, by the definition of the function $f_Q$, it follows from~\eqref{eq:Q(,)} that $f_Q(n) \leq \omega(G) \cdot \MAIS(G)$, hence there exists a clique of $G$ of size at least $\frac{f_Q(n)}{\MAIS(G)}$.
This allows us to apply Lemma~\ref{lemma:IterativeCC} and to obtain that for every digraph $G \in \calG$ on $n$ vertices, it holds that
\begin{eqnarray}\label{eq:bound_CC_proof}
\CC(G) \leq \Big (  \sum_{i=1}^{n}{\frac{1}{f_Q(i)}} \Big ) \cdot \MAIS(G).
\end{eqnarray}

For the algorithmic statement of the theorem, suppose that there exists a polynomial-time algorithm that given a digraph $G \in \calG$ on $n$ vertices finds a clique $C$ of $G$ and a set $I$ for which $G[I]$ is acyclic, such that for all integers $s,t \geq 1$, if $n \geq Q(s,t)$ then either $|C| \geq s$ or $|I| \geq t$.
Fix a digraph $G \in \calG$ on $n$ vertices.
While running on $G$, the sets $C$ and $I$ returned by the given algorithm must satisfy $Q(|C|+1,|I|+1) > n$, hence $f_Q(n) \leq |C| \cdot |I|$.
It thus follows that the clique $C$ returned by the algorithm satisfies $|C| \geq \frac{f_Q(n)}{|I|} \geq \frac{f_Q(n)}{\MAIS(G)}$.
Therefore, using the algorithmic statement of Lemma~\ref{lemma:IterativeCC}, it follows that there exists a polynomial-time algorithm that given a digraph $G \in \calG$ on $n$ vertices finds a clique cover of $G$ whose size does not exceed the bound in~\eqref{eq:bound_CC_proof}.
This completes the proof.
\end{proof}

\subsection{General digraphs}\label{sec:applications}

Our first application of Theorem~\ref{thm:Intro_CC_general} concerns the family of all digraphs.
Namely, we prove Theorem~\ref{thm:IntroDirectedGen}, which says that there exists a polynomial-time algorithm that given a digraph $G$ on $n$ vertices finds a clique cover of $G$ of size $O(\frac{n}{\log n}) \cdot \MAIS(G)$.
To do so, we need the following lemma, whose proof relies on a directed variant of an argument from Ramsey theory due to Erd{\H{o}}s and Szekeres~\cite{ErdosS35}.

\begin{lemma}\label{lemma:RamseyD}
Let $Q: \N \times \N \rightarrow \N$ be the function defined by $Q(s,t) = \binom{s+t-2}{s-1} \cdot 2^{t-1}$.
There exists a polynomial-time algorithm that given a digraph $G$ on $n$ vertices finds a clique $C$ of $G$ and a set $I$ for which $G[I]$ is acyclic, such that for all integers $s,t \geq 1$, if $n \geq Q(s,t)$ then either $|C| \geq s$ or $|I| \geq t$.
\end{lemma}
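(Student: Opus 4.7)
I would adapt the classical Erd\H{o}s--Szekeres Ramsey argument to the directed setting, realised as an efficient recursive algorithm. The starting observation is a Pascal-type identity for $Q$: using $\binom{s+t-2}{s-1} = \binom{s+t-3}{s-2} + \binom{s+t-3}{s-1}$ one verifies that
\[ Q(s,t) = Q(s-1,t) + 2 \cdot Q(s,t-1) \qquad \text{for all } s, t \geq 2. \]
The factor of $2$ on the right is exactly what accommodates the directed setting, where a vertex that fails to form a clique edge with a chosen $v$ can fall into two distinct classes rather than one.

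The core subroutine is a procedure $\DRamsey(G, s, t)$ which, under the assumption $|V(G)| \geq Q(s, t)$, returns either a clique of size at least $s$ or an induced acyclic vertex set of size at least $t$. The base cases $s = 1$ and $t = 1$ are handled by returning any single vertex. In the inductive step, I pick an arbitrary vertex $v$ and partition $V(G) \setminus \{v\}$ into three sets: $A$, the vertices $u$ for which both $u \to v$ and $v \to u$ are edges (the ``clique-neighbours'' of $v$); $B_1$, the vertices $u$ for which $u \to v$ is an edge but $v \to u$ is not; and $B_2$, the vertices $u$ for which $u \to v$ is not an edge. A pigeonhole argument based on the displayed recursion and the hypothesis $|V(G)| \geq Q(s,t)$ forces at least one of $|A| \geq Q(s-1, t)$, $|B_1| \geq Q(s, t-1)$, or $|B_2| \geq Q(s, t-1)$. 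In the first case I recurse on $G[A]$ with parameters $(s-1, t)$: a returned clique of size $s-1$ extends by $v$ to a clique of size $s$, since $v$ forms a clique edge with every vertex of $A$. In the second case I recurse on $G[B_1]$ with parameters $(s, t-1)$: a returned acyclic set $I'$ can be extended by $v$ because $v$ has no outgoing edge into $B_1$, so $v$ is a sink in $G[I' \cup \{v\}]$ and no cycle can pass through it. The third case is symmetric, with $v$ acting as a source in $G[I' \cup \{v\}]$ since no $u \in B_2$ has the edge $u \to v$.

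To obtain a single algorithm that satisfies the ``for all $s, t$'' requirement of the lemma, I would enumerate all pairs $(s, t)$ with $Q(s,t) \leq n$, invoke $\DRamsey(G, s, t)$ on each, and maintain across these calls the largest returned clique $C^*$ and the largest returned acyclic set $I^*$. For any $(s_0, t_0)$ with $n \geq Q(s_0, t_0)$, the invocation $\DRamsey(G, s_0, t_0)$ returned a clique of size $\geq s_0$ or an acyclic set of size $\geq t_0$, so the maintained $C^*$ or $I^*$ meets the required bound. The recursion depth of $\DRamsey$ is at most $s + t = O(n)$ and each level does $O(n^2)$ work, while the outer loop ranges over $O(n^2)$ pairs, giving an overall polynomial running time. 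The main obstacle is the $3$-way partition and the verification that $v$ may be safely adjoined to an acyclic subset of $G[B_1]$ as a sink or of $G[B_2]$ as a source; this structural step replaces the $2$-way neighbour/non-neighbour split of the undirected Erd\H{o}s--Szekeres argument and is precisely what produces the factor of $2$ multiplying $Q(s, t-1)$ in the recursion.
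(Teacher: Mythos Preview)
Your approach is correct and rests on the same three-way partition and Pascal-type recursion $Q(s,t) = Q(s-1,t) + 2Q(s,t-1)$ as the paper. The difference is one of packaging: the paper's $\DRamsey$ takes no target parameters $(s,t)$ at all, recurses on \emph{all three} parts simultaneously, and returns the pair $(C,I)$ where $C$ is the largest of $C_1 \cup \{u\}, C_2, C_3$ and $I$ is the largest of $I_1, I_2 \cup \{u\}, I_3 \cup \{u\}$. A single run then already satisfies the ``for all $s,t$'' clause, because the analysis bounds a quantity $\vec{P}(s,t)$ defined directly in terms of the algorithm's output. Your version instead parameterises by $(s,t)$, selects one branch by pigeonhole, and wraps an outer loop over all relevant pairs; this is also correct and polynomial, but the paper's single-pass form is cleaner and avoids the outer enumeration (and the small care needed to cap that enumeration, since $Q(s,1)=1$ for every $s$, so one must explicitly restrict to $s,t \leq n$).
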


\begin{proof}
We define a recursive algorithm, called $\DRamsey$, that given a digraph $G=(V,E)$ returns a pair $(C,I)$ of subsets of $V$. The algorithm is defined as follows.

\begin{enumerate}
  \item\label{itm:1} If $V = \emptyset$, then return the pair $(\emptyset,\emptyset)$.
  \item\label{itm:2} Choose an arbitrary vertex $u \in V$.
  \item\label{itm:3} $(C_1, I_1) \leftarrow \DRamsey(G[V_1])$, where $V_1 = \{ v \in V \mid (u,v) \in E~\mbox{and}~(v,u) \in E\}$.
  \item\label{itm:4} $(C_2, I_2) \leftarrow \DRamsey(G[V_2])$, where $V_2 = \{ v \in V \mid (u,v) \notin E~\mbox{and}~(v,u) \in E\}$.
  \item\label{itm:5} $(C_3, I_3) \leftarrow \DRamsey(G[V_3])$, where $V_3 = V \setminus (V_1 \cup V_2 \cup \{u\})$.
  \item\label{itm:6} Let $C$ be a largest set among $C_1 \cup \{u\}$, $C_2$, and $C_3$, let $I$ be a largest set among $I_1$, $I_2 \cup \{u\}$, and $I_3 \cup \{u\}$, and return the pair $(C,I)$.
\end{enumerate}

Observe that for any choice of a vertex $u$ in Step~\ref{itm:2} of the $\DRamsey$ algorithm, the sets $V_1$, $V_2$, and $V_3$ defined in Steps~\ref{itm:3},~\ref{itm:4}, and~\ref{itm:5} do not include $u$ and are pairwise disjoint.
This implies that throughout the run of the algorithm on a digraph $G$, every recursive call chooses in Step~\ref{itm:2} a distinct vertex of $G$, hence the total number of recursive calls does not exceed the number of vertices in $G$.
Since the number of operations made by the algorithm for preparing the three inputs of the recursive calls and for combining the returned pairs to produce the output of the algorithm is polynomial, we conclude that the algorithm can be implemented in polynomial time.

We first prove that on every input digraph $G=(V,E)$, the $\DRamsey$ algorithm returns a pair $(C,I)$ such that $C$ is a clique of $G$ and $G[I]$ is acyclic.
This is shown by induction on the number of vertices of $G$. If $G$ has no vertices, this trivially holds by Step~\ref{itm:1} of the algorithm.
Otherwise, the algorithm chooses in Step~\ref{itm:2} an arbitrary vertex $u \in V$, and for each $i \in [3]$, it calls the $\DRamsey$ algorithm recursively on the sub-digraph $G[V_i]$ to obtain a pair $(C_i,I_i)$, where $V_1,V_2,V_3$ are defined in Steps~\ref{itm:3},~\ref{itm:4}, and~\ref{itm:5} of the algorithm. By the inductive assumption, for each $i \in [3]$, $C_i$ is a clique of $G[V_i]$, and $I_i$ induces an acyclic sub-digraph of $G[V_i]$.

By the definition of $V_1$, the vertex $u$ is connected to all vertices of $V_1$ with edges in both directions, hence $C_1 \cup \{u\}$ is a clique of $G$. It thus follows that the set $C$, defined in Step~\ref{itm:6} as a largest set among $C_1 \cup \{u\}$, $C_2$, and $C_3$, is a clique of $G$, as required.
By the definition of $V_2$, the vertex $u$ has no directed edges to the vertices of $V_2$.
Since $I_2$ induces an acyclic sub-digraph of $G$, it follows that $I_2 \cup \{u\}$ induces such a sub-digraph as well.
Similarly, by the definition of $V_3$, no vertex of $V_3$ has a directed edge to the vertex $u$.
Since $I_3$ induces an acyclic sub-digraph of $G$, it follows that $I_3 \cup \{u\}$ induces such a sub-digraph as well.
It thus follows that the set $I$, defined in Step~\ref{itm:6} as a largest set among $I_1$, $I_2 \cup \{u\}$, and $I_3 \cup \{u\}$, induces an acyclic sub-digraph of $G$, as required.

We finally analyze the sizes of the sets returned by the $\DRamsey$ algorithm.
To do so, for every pair of integers $s,t \geq 1$, let $\vec{P}(s,t)$ denote the smallest integer $p$ such that on every digraph on at least $p$ vertices, the $\DRamsey$ algorithm is guaranteed to return a clique of size at least $s$ or a set of vertices that induces an acyclic sub-digraph of size at least $t$.
We claim that for all $s,t \geq 2$, it holds that
\begin{eqnarray}\label{eq:DRamsey}
\vec{P}(s,t) \leq \vec{P}(s-1,t)+ 2 \cdot \vec{P}(s,t-1)-2.
\end{eqnarray}
To see this, consider a digraph $G$ on $p \geq \vec{P}(s-1,t)+ 2 \cdot \vec{P}(s,t-1)-2$ vertices.
While running on $G$, the $\DRamsey$ algorithm calls itself recursively on three sub-digraphs of $G$, induced by three sets $V_1$, $V_2$, and $V_3$, which satisfy $|V_1|+|V_2|+|V_3|=p-1$. Our assumption on $p$ implies that either $|V_1| \geq \vec{P}(s-1,t)$, or $|V_2| \geq \vec{P}(s,t-1)$, or $|V_3| \geq \vec{P}(s,t-1)$.
By the definition of $\vec{P}(s,t)$, the three pairs $(C_1,I_1)$, $(C_2,I_2)$, and $(C_3,I_3)$ returned by the recursive calls satisfy $|C_1| \geq s-1$ or $|I_1| \geq t$ in the first case, $|C_2| \geq s$ or $|I_2| \geq t-1$ in the second case, and $|C_3| \geq s$ or $|I_3| \geq t-1$ in the third.
By the definition of the algorithm, it follows that in each of these cases, the $\DRamsey$ algorithm returns on $G$ a clique of size at least $s$ or a set of vertices that induces an acyclic sub-digraph of size at least $t$, as required.

It remains to verify that $\vec{P}(s,t) \leq Q(s,t)$ for all $s,t \geq 1$.
This can be checked by induction on $s+t$.
First, if either $s=1$ or $t=1$, then $\vec{P}(s,t)=1 \leq Q(s,t)$.
Next, for the induction step, combine~\eqref{eq:DRamsey} with the inductive assumption on $s+t-1$ to obtain that
\begin{eqnarray*}
\vec{P}(s,t) &\leq& \vec{P}(s-1,t)+ 2 \cdot \vec{P}(s,t-1)-2 \\
&\leq& Q(s-1,t)+ 2 \cdot Q(s,t-1) \\
&=& \binom{s+t-3}{s-2} \cdot 2^{t-1} + 2 \cdot \binom{s+t-3}{s-1} \cdot 2^{t-2} \\
&=& \Bigg ( \binom{s+t-3}{s-2} + \binom{s+t-3}{s-1} \Bigg ) \cdot 2^{t-1} \\
&=& \binom{s+t-2}{s-1} \cdot 2^{t-1} = Q(s,t),
\end{eqnarray*}
where the third equality holds by Pascal's identity.
This completes the proof.
\end{proof}

Now, we combine Theorem~\ref{thm:Intro_CC_general} with Lemma~\ref{lemma:RamseyD} to prove Theorem~\ref{thm:IntroDirectedGen}.

\begin{proof}[ of Theorem~\ref{thm:IntroDirectedGen}]
Let $Q: \N \times \N \rightarrow \N$ be the function given in Lemma~\ref{lemma:RamseyD}, defined by $Q(s,t) = \binom{s+t-2}{s-1} \cdot 2^{t-1}$.
Notice that for all integers $s,t \geq 1$, it holds that
\[Q(s+1,t+1) = \binom{s+t}{s} \cdot 2^t \leq 2^{s+t} \cdot 2^t = 2^{s+2t}.\]
It follows, for every integer $n$, that if $Q(s+1,t+1) > n$ then $2^{s+2t} > n$, which implies that either $s > \lfloor (\log_2 n)/3 \rfloor$ or $t > \lfloor (\log_2 n)/3 \rfloor$. This yields that the function $f_Q$ associated with $Q$ in Theorem~\ref{thm:Intro_CC_general} satisfies
\begin{eqnarray}\label{eq:f_Q_D}
f_Q(n) > \lfloor (\log_2 n)/3 \rfloor.
\end{eqnarray}
Equipped with the algorithm given by Lemma~\ref{lemma:RamseyD}, we apply Theorem~\ref{thm:Intro_CC_general} to the family of all digraphs.
It follows that there exists a polynomial-time algorithm that given a digraph $G$ on $n$ vertices finds a clique cover of $G$ whose size does not exceed
\[ \Big (\sum_{i=1}^{n}{\frac{1}{f_Q(i)}} \Big ) \cdot \MAIS(G) \leq  O \Big (\frac{n}{\log n} \Big ) \cdot \MAIS(G),\]
where the above inequality can be verified using~\eqref{eq:f_Q_D}. This completes the proof.
\end{proof}

\subsection{Digraph families with polynomial Ramsey numbers}

Our next application of Theorem~\ref{thm:Intro_CC_general} concerns digraph families whose Ramsey numbers grow polynomially.

\begin{theorem}\label{thm:CC_Ramsey}
Let $\calG$ be a hereditary family of digraphs such that for some constants $c$ and $a \geq 1$, it holds that $\vec{R}_\calG(s,t) \leq c \cdot (s \cdot t)^a$ for all integers $s,t \geq 1$. Then for every digraph $G \in \calG$ on $n$ vertices, it holds that $\CC(G) \leq h(n) \cdot \MAIS(G)$ for a function $h:\N \rightarrow \N$ satisfying $h(n) = \Theta(\log n)$ if $a=1$, and $h(n) = \Theta(n^{1-1/a})$ otherwise.
\end{theorem}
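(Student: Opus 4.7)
The plan is to apply the existential part of Theorem~\ref{thm:Intro_CC_general} with the natural choice $Q(s,t) = c \cdot (s \cdot t)^a$, which satisfies $\vec{R}_\calG(s,t) \leq Q(s,t)$ by hypothesis. The task then reduces to two purely numerical estimates: first, controlling the function $f_Q$, and second, summing $1/f_Q(i)$ for $i$ from $1$ to $n$.

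For the first estimate, I would show that $f_Q(n) = \Theta(n^{1/a})$. The defining condition $Q(s+1,t+1) > n$ translates into $(s+1)(t+1) > (n/c)^{1/a}$, so one seeks the minimum of $s \cdot t$ over $s,t \geq 1$ subject to this. Writing $s \cdot t = (s+1)(t+1) - (s+t+1)$, the minimum is attained by making the pair as unbalanced as possible: taking $s = 1$ and $t = \lceil (n/c)^{1/a}/2 \rceil$ works, and a matching lower bound follows since the constraint forces $\max(s+1,t+1) \geq \sqrt{(n/c)^{1/a}}$, which together with $s,t \geq 1$ yields $st = \Omega(n^{1/a})$. Hence $f_Q(n) = \Theta(n^{1/a})$, with the hidden constants depending only on $c$ and $a$.

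For the second estimate, I would plug this into the bound supplied by Theorem~\ref{thm:Intro_CC_general}, obtaining
\[
\CC(G) \leq \Bigl( \sum_{i=1}^{n} \frac{1}{f_Q(i)} \Bigr) \cdot \MAIS(G) = \Theta\Bigl( \sum_{i=1}^{n} i^{-1/a} \Bigr) \cdot \MAIS(G).
\]
For $a=1$ this is the harmonic sum, giving $\Theta(\log n)$. For $a>1$ the exponent $1/a$ lies in $(0,1)$, and the sum is comparable to the integral $\int_1^n x^{-1/a}\,dx = \tfrac{a}{a-1}(n^{1-1/a} - 1)$, yielding $\Theta(n^{1-1/a})$. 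Defining $h(n)$ to be the ceiling of this bound gives the advertised function $h:\N \rightarrow \N$ and completes the proof.

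I do not anticipate a serious obstacle: the whole argument is a bookkeeping computation on top of Theorem~\ref{thm:Intro_CC_general}. The only mildly delicate point is making sure that the minimum defining $f_Q$ is not achieved by degenerate small-$s$ or small-$t$ corner cases in a way that breaks the $\Theta$-estimate; the observation that $st = (s+1)(t+1) - (s+t+1)$ and that maximizing $s+t$ is what drives the minimum handles this cleanly.
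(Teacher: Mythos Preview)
Your approach is essentially the paper's: apply Theorem~\ref{thm:Intro_CC_general} with $Q$ bounded by $c\cdot(st)^a$, show $f_Q(n)=\Omega(n^{1/a})$, and evaluate the resulting sum by comparison with an integral. The paper in fact only uses the lower bound $f_Q(n)\geq\Omega(n^{1/a})$, since that is all that is needed to upper-bound $\sum_{i\leq n}1/f_Q(i)$; your upper bound on $f_Q$ via $s=1$ is correct but unnecessary.

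There is, however, a slip in the step that matters. From $(s+1)(t+1)>(n/c)^{1/a}$ you deduce $\max(s+1,t+1)\geq\sqrt{(n/c)^{1/a}}$ and then claim that, together with $s,t\geq 1$, this yields $st=\Omega(n^{1/a})$. It does not: with $s\approx\sqrt{(n/c)^{1/a}}$ and $t=1$ one has $\max(s+1,t+1)\geq\sqrt{(n/c)^{1/a}}$ but only $st=\Theta(n^{1/(2a)})$. The correct (and simpler) argument, which is what the paper does, uses the full product constraint directly: for $s,t\geq 1$ one has $s+1\leq 2s$ and $t+1\leq 2t$, hence $(s+1)(t+1)\leq 4st$, so $(s+1)(t+1)>(n/c)^{1/a}$ forces $st>(n/c)^{1/a}/4$. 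With this fix your argument goes through and matches the paper's proof.
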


\begin{proof}
Let $\calG$ be a hereditary family of digraphs as in the theorem.
Let $Q: \N \times \N \rightarrow \N$ be the function defined by $Q(s,t) = \vec{R}_\calG(s,t)$ for all $s,t$.
Notice that for all integers $s,t \geq 1$, it holds that
\[Q(s+1,t+1) \leq c \cdot (s+1)^a \cdot (t+1)^a \leq c \cdot (2s)^a \cdot (2t)^a.\]
It therefore follows, for every integer $n$, that if $Q(s+1,t+1)>n$ then $s \cdot t > n^{1/a} / (4 \cdot c^{1/a})$.
This yields that the function $f_Q$ associated with $Q$ in Theorem~\ref{thm:Intro_CC_general} satisfies $f_Q(n) \geq \Omega(n^{1/a})$.
Applying Theorem~\ref{thm:Intro_CC_general} to the family $\calG$, it follows that for every digraph $G \in \calG$ on $n$ vertices, it holds that $\CC(G) \leq h(n) \cdot \MAIS(G)$ for a function $h:\N \rightarrow \N$ satisfying $h(n) = \Theta(\sum_{i=1}^{n}{\frac{1}{i^{1/a}}})$.
A standard integral calculation implies that for $a=1$, it holds that $h(n) = \Theta(\log n)$, and for $a>1$, it holds that $h(n) = \Theta(n^{1-1/a})$. This completes the proof.
\end{proof}

\subsection{Graph families}\label{sec:graphs}

Our method for finding clique covers of digraphs can also be used in the undirected setting.
To see this, for any given hereditary family $\calG$ of graphs, apply Theorem~\ref{thm:Intro_CC_general} to the family of digraphs obtained from the graphs of $\calG$ by replacing every edge by two oppositely directed edges. Observe that this operation preserves the cliques of each graph of $\calG$ and, in addition, it transforms the notion of independent sets of the graph to the notion of acyclic induced sub-digraphs of the corresponding digraph.
This allows us to derive the following undirected variant of Theorem~\ref{thm:Intro_CC_general}.

\begin{theorem}\label{thm:CC_general_Undir}
Let $\calG$ be a hereditary family of graphs, and let $Q: \N \times \N \rightarrow \N$ be a function such that $R_\calG(s,t) \leq Q(s,t)$ for all integers $s,t \geq 1$.
Let $f_Q: \N \rightarrow \N$ denote the function defined by
\[ f_Q(n) = \min_{s,t \in \N} ~\{ s \cdot t \mid Q(s+1,t+1) > n \}.\]
Then for every graph $G \in \calG$ on $n$ vertices, it holds that
\begin{eqnarray}\label{eq:bound_CC_Ramsey_Undir}
\overline{\chi}(G) \leq \Big (  \sum_{i=1}^{n}{\frac{1}{f_Q(i)}} \Big ) \cdot \alpha(G).
\end{eqnarray}
Suppose further that there exists a polynomial-time algorithm that given a graph $G \in \calG$ on $n$ vertices finds a clique $C$ of $G$ and an independent set $I$ of $G$, such that for all integers $s,t \geq 1$, if $n \geq Q(s,t)$ then either $|C| \geq s$ or $|I| \geq t$. Then there exists a polynomial-time algorithm that given a graph $G \in \calG$ on $n$ vertices finds a clique cover of $G$ whose size does not exceed the bound in~\eqref{eq:bound_CC_Ramsey_Undir}.
\end{theorem}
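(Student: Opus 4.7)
The plan is to derive Theorem~\ref{thm:CC_general_Undir} directly from Theorem~\ref{thm:Intro_CC_general} by passing to the ``symmetric doubling'' of the given graph family. Given the hereditary family $\calG$ of graphs, let $\vec{\calG}$ be the family of digraphs obtained by replacing every edge of each $G \in \calG$ with two oppositely directed edges. This doubling operation commutes with vertex removal, so $\vec{\calG}$ is a hereditary family of digraphs, and $G \mapsto \vec{G}$ is a polynomial-time computable bijection between $\calG$ and $\vec{\calG}$.

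The core of the proof is a dictionary between the relevant undirected invariants of $G$ and the directed invariants of $\vec{G}$. First, a set $C$ of vertices is a clique of $G$ if and only if it is a clique of $\vec{G}$ in the paper's digraph sense, because the latter asks precisely that every two distinct vertices of $C$ be joined by two oppositely directed edges. Consequently clique covers of $G$ and of $\vec{G}$ coincide, so $\overline{\chi}(G) = \CC(\vec{G})$. Second, a set $I$ is independent in $G$ if and only if $\vec{G}[I]$ is acyclic: any edge of $G$ internal to $I$ produces a $2$-cycle in $\vec{G}[I]$, while if $I$ is independent in $G$ then $\vec{G}[I]$ has no edges at all and is trivially acyclic. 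Hence $\alpha(G) = \MAIS(\vec{G})$, and matching cliques to cliques and independent sets to acyclic induced sub-digraphs yields $R_\calG(s,t) = \vec{R}_{\vec{\calG}}(s,t)$ for every $s,t \geq 1$.

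With this dictionary, the hypothesis $R_\calG(s,t) \leq Q(s,t)$ becomes $\vec{R}_{\vec{\calG}}(s,t) \leq Q(s,t)$, so Theorem~\ref{thm:Intro_CC_general} applies to $\vec{\calG}$ with the same function $Q$ and hence with the same associated function $f_Q$. The existential bound~\eqref{eq:Introbound_CC_Ramsey} on $\CC(\vec{G})$ translates, under the correspondences $\CC(\vec{G}) = \overline{\chi}(G)$ and $\MAIS(\vec{G}) = \alpha(G)$, into the desired bound~\eqref{eq:bound_CC_Ramsey_Undir} on $\overline{\chi}(G)$.

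For the algorithmic statement, assume a polynomial-time algorithm that on input $G \in \calG$ returns a clique $C$ of $G$ and an independent set $I$ of $G$ with the promised $Q$-guarantee. Given any $\vec{G} \in \vec{\calG}$, recover the underlying $G \in \calG$ in polynomial time and run the assumed algorithm; the pair $(C,I)$ it returns is simultaneously a clique of $\vec{G}$ and a set inducing an acyclic sub-digraph of $\vec{G}$, with the same $Q$-guarantee. This is exactly the hypothesis of the algorithmic part of Theorem~\ref{thm:Intro_CC_general} applied to $\vec{\calG}$, which therefore produces in polynomial time a clique cover of $\vec{G}$ obeying~\eqref{eq:Introbound_CC_Ramsey}; the clique correspondence converts it into a clique cover of $G$ of the same size. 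I do not anticipate a genuine obstacle here, as the argument is essentially bookkeeping; the only care required is in verifying each half of the undirected/directed dictionary, and in particular that the directed notion of clique really does coincide with the undirected one under the symmetric doubling.
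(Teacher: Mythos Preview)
Your proposal is correct and follows essentially the same approach as the paper: the paper derives Theorem~\ref{thm:CC_general_Undir} from Theorem~\ref{thm:Intro_CC_general} by replacing each edge of the graphs in $\calG$ by two oppositely directed edges and noting that cliques are preserved while independent sets become precisely the acyclic induced sub-digraphs. Your write-up spells out this dictionary (and the algorithmic transfer) in more detail than the paper, but the argument is the same.
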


Theorem~\ref{thm:CC_general_Undir} can be used to derive the algorithmic result of Boppana and Halld{\'{o}}rsson~\cite{BoppanaH92}, stated earlier as Theorem~\ref{thm:BH92}. For completeness, we provide the details in Appendix~\ref{app:undirected}.
We further state here the following corollary of Theorem~\ref{thm:CC_Ramsey} for the undirected setting.

\begin{corollary}\label{cor:CC_Ramsey_U}
Let $\calG$ be a hereditary family of graphs such that for some constants $c$ and $a \geq 1$, it holds that $R_\calG(s,t) \leq c \cdot (s \cdot t)^a$ for all integers $s,t \geq 1$. Then for every graph $G \in \calG$ on $n$ vertices, it holds that~ $\overline{\chi}(G) \leq h(n) \cdot \alpha(G)$ for a function $h:\N \rightarrow \N$ satisfying $h(n) = \Theta(\log n)$ if $a=1$, and $h(n) = \Theta(n^{1-1/a})$ otherwise.
\end{corollary}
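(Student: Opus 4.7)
The plan is to mirror the proof of Theorem~\ref{thm:CC_Ramsey} in the undirected setting, using Theorem~\ref{thm:CC_general_Undir} in place of Theorem~\ref{thm:Intro_CC_general}. Concretely, given a hereditary family $\calG$ satisfying $R_\calG(s,t) \leq c \cdot (s \cdot t)^a$ for all $s,t \geq 1$, I would set $Q(s,t) = R_\calG(s,t)$ and feed this $Q$ into Theorem~\ref{thm:CC_general_Undir}, obtaining the bound $\overline{\chi}(G) \leq \big(\sum_{i=1}^n \frac{1}{f_Q(i)}\big)\cdot \alpha(G)$ for every $G \in \calG$ on $n$ vertices.

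The first step is to estimate $f_Q$ from below. Using the polynomial hypothesis, for all $s,t \geq 1$ we have $Q(s+1,t+1) \leq c \cdot (s+1)^a (t+1)^a \leq c \cdot (2s)^a (2t)^a$, so whenever $Q(s+1,t+1) > n$ we must have $s\cdot t > n^{1/a}/(4 c^{1/a})$. By the definition of $f_Q$ this yields $f_Q(n) \geq \Omega(n^{1/a})$.

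The second step is to evaluate the harmonic-type sum $\sum_{i=1}^n \frac{1}{f_Q(i)} = O\bigl(\sum_{i=1}^n i^{-1/a}\bigr)$ by a standard integral comparison: for $a=1$ the sum is $\Theta(\log n)$, while for $a > 1$ it is $\Theta(n^{1-1/a})$. Plugging this into the bound from Theorem~\ref{thm:CC_general_Undir} gives $\overline{\chi}(G) \leq h(n) \cdot \alpha(G)$ with $h$ having the claimed asymptotics, completing the proof.

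No step poses a real obstacle here, since the argument is a direct transplantation of the digraph proof to the undirected case; the underlying reason is that replacing each edge of a graph in $\calG$ by two oppositely directed edges preserves cliques and turns independent sets into acyclic induced sub-digraphs, so the Ramsey bound on $R_\calG$ translates cleanly through Theorem~\ref{thm:CC_general_Undir}. The only minor care is in justifying the constant factors hidden in $\Omega$ and $\Theta$, but these are routine.
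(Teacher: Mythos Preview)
Your proposal is correct and essentially identical to the paper's approach: the paper states Corollary~\ref{cor:CC_Ramsey_U} as an immediate corollary of Theorem~\ref{thm:CC_Ramsey} via the undirected-to-directed translation described at the start of Section~\ref{sec:graphs}, and your argument simply spells out that translation explicitly through Theorem~\ref{thm:CC_general_Undir}.
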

\noindent
It was shown in~\cite[Theorem~1]{ArbabjolfaeiK16} that for graphs on $n$ vertices taken from a hereditary family $\calG$ whose Ramsey numbers satisfy $R_\calG(s,t) \leq O(s^a \cdot t^b)$ for constants $a,b \geq 1$, the ratio between the clique cover and independence numbers is bounded by $O(n^{1-1/(a+b+1)})$.
Our Corollary~\ref{cor:CC_Ramsey_U} improves this bound to $O(n^{1-1/\max(a,b)})$ when $\max(a,b)>1$, and to $O(\log n)$ when $a=b=1$.

\section{Clique Covering of Quasi-Line Graphs}\label{sec:quasi}

A graph $G$ is called a quasi-line graph if the neighborhood of every vertex of $G$ can be partitioned into two cliques.
Note that all line graphs are quasi-line graphs, and that all quasi-line graphs are claw-free, however, the converse of each of these statements is false (see~\cite{ClawFree05}).
We present here the simple proof of Theorem~\ref{thm:IntroQuasi}, which asserts that there exists a polynomial-time algorithm that given a quasi-line graph $G$ finds a clique cover of $G$ of size at most $2 \cdot \alpha(G)$. As mentioned earlier, the theorem strengthens a result on line graphs given in~\cite[Proposition~5]{DaneshpajouhMM21}.

\begin{proof}[ of Theorem~\ref{thm:IntroQuasi}]
Consider the algorithm that given a quasi-line graph $G$ acts as follows.
The algorithm maintains a collection $\calC$ of cliques of $G$ initiated as $\calC = \emptyset$.
As long as $G$ has at least one vertex, the algorithm chooses an arbitrary vertex $u$ in $G$, adds to $\calC$ at most two cliques of $G$ that cover the closed neighborhood of $u$, and removes their vertices from $G$. Finally, when $G$ has no vertices, the algorithm returns the obtained collection $\calC$.

We first show that the algorithm is guaranteed to find a clique cover of the input graph in polynomial time.
Let $G$ be a quasi-line graph. By definition, the subgraph induced by the neighborhood of any vertex $u$ can be partitioned into at most two cliques of $G$, which can be found in polynomial time (say, by $2$-coloring the complement of the subgraph induced by the neighborhood of $u$).
By adding $u$ to one of those cliques, we obtain the cliques that cover the closed neighborhood of $u$, as needed in every iteration of the algorithm.
Since the family of quasi-line graphs is hereditary, we are allowed to repeat this process after removing the covered vertices from $G$.
Once no vertices remain in the graph, the collection $\calC$ forms a clique cover of $G$. Since the number of vertices of $G$ decreases in every iteration, the number of iterations does not exceed the number of vertices of the input graph, hence the total running time is polynomial.

We next analyze the size of the clique cover returned by the algorithm.
For a quasi-line graph $G$, let $r$ denote the number of iterations applied by the above algorithm while running on $G$, and let $u_1, \ldots, u_r$ denote the vertices chosen throughout these iterations. Since every iteration adds to the collection $\calC$ at most two cliques, it follows that the returned $\calC$ satisfies $|\calC| \leq 2 \cdot r$. We further observe that the vertices $u_1, \ldots, u_r$ form an independent set in $G$. This indeed holds because whenever a vertex is chosen by the algorithm, its closed neighborhood is removed from the graph. Therefore, the clique cover $\calC$ returned by the algorithm satisfies $|\calC| \leq 2 \cdot r \leq 2 \cdot \alpha(G)$, as required.
\end{proof}

\begin{remark}\label{remark:2tight}
The guarantee of the algorithm given by Theorem~\ref{thm:IntroQuasi} is essentially optimal, even for line graphs.
To see this, let $G$ denote the line graph of the complete graph on $n$ vertices for an integer $n \geq 4$.
Consequently, the vertices of $G$ are all the $2$-subsets of $[n]$, where two such subsets are adjacent if they have a non-trivial intersection.
It follows that $G$ is the complement of the Kneser graph $K(n,2)$, hence it holds that $\alpha(G) = \lfloor n/2 \rfloor$ and $\overline{\chi}(G) = n-2$ (see~\cite{LovaszKneser}).
\end{remark}

\section{Generalized Index Coding}\label{sec:generalIC}

In this section, we consider a generalized version of the index coding problem, studied in~\cite{BlasiakKL13} and defined as follows.
An instance $H$ of the problem consists of integers $m$ and $n$ such that $m \geq n$, and for each $j \in [m]$, an element $r(j) \in [n]$ and a set $N(j) \subseteq [n] \setminus \{r(j)\}$.
Here, a sender holds an $n$-symbol message $x \in \Sigma^n$ over some alphabet $\Sigma$, and $m$ denotes the number of receivers. The $j$th receiver is interested in the symbol $x_{r(j)}$ of $x$ and has the restriction $x_{N(j)}$ of $x$ to the indices of $N(j)$ as side information. We naturally assume that for each $i \in [n]$ there exists some $j \in [m]$ with $r(j)=i$. An index code for $H$ over $\Sigma$ is an encoding function $E: \Sigma^n \rightarrow \Sigma_E$ for some alphabet $\Sigma_E$, such that for each $j \in [m]$, there exists a decoding function $g_j: \Sigma_E \times \Sigma^{|N(j)|} \rightarrow \Sigma$ satisfying $g_j(E(x),x_{N(j)})=x_{r(j)}$ for all $x \in \Sigma^n$.
The encoding length in bits of this index code is $\lceil \log_2 |\Sigma_E| \rceil$.
For an integer $t \geq 1$, we let $\beta_t(H)$ denote the smallest possible encoding length in bits of an index code for $H$ over an alphabet $\Sigma$ of size $|\Sigma|=2^t$. The index coding rate of $H$, denoted by $\beta(H)$, is defined by $\beta(H) = \lim_{t \rightarrow \infty}{\frac{\beta_t(H)}{t}}$.
Note that the special case of the problem with $m=n$ coincides with the standard index coding problem on digraphs.
Note further that $\beta(H) \leq \beta_1(H) \leq n$, where the second inequality follows by the trivial index code that transmits the entire message.

Let $\calC$ be a collection of subsets of $[n]$, such that for each $j \in [m]$, there exists a set $C \in \calC$ satisfying $r(j) \in C$ and $C \setminus \{r(j)\} \subseteq N(j)$.
Consider the encoding function over $\Sigma = \{0,1\}$ that for every set $C \in \calC$ includes the xor of the bits associated with $C$. This encoding function forms an index code for $H$, because the $j$th receiver is able to discover its required bit given its side information and the xor associated with a set $C \in \calC$ which satisfies $r(j) \in C$ and $C \setminus \{r(j)\} \subseteq N(j)$. It thus follows that for such a collection $\calC$, it holds that $\beta(H) \leq \beta_1(H) \leq |\calC|$.
We refer to such an index code for $H$, in this generalized setting, as a clique cover index code.

We next mention a lower bound from~\cite{BlasiakKL13} on the index coding rate in the generalized setting.
An expanding sequence of size $k$ for $H$ is a sequence $j_1,\ldots, j_k \in [m]$ such that for each $\ell \in [k]$, it holds that $r(j_\ell) \notin \cup_{t<\ell}{N(j_t)}$, that is, the symbol required by receiver $j_\ell$ is unknown to receivers $j_1, \ldots, j_{\ell-1}$.
Let $\MES(H)$ denote the maximum size of an expanding sequence for $H$.
It was proved in~\cite[Lemma~III.1]{BlasiakKL13} that for every instance $H$ of the generalized index coding problem, it holds that $\beta(H) \geq \MES(H)$.

We prove the following algorithmic result for the generalized index coding problem.
Note that its guarantee is useful only for instances whose number of receivers $m$ satisfies $m = o(n \cdot \log n)$.

\begin{theorem}\label{thm:genIC}
There exists a polynomial-time algorithm that given an instance $H$ of the generalized index coding problem with $n$ symbols and $m$ receivers ($m \geq n$) finds a clique cover index code for $H$ of length $O(\frac{m}{\log n}) \cdot \MES(H) \leq O(\frac{m}{\log n}) \cdot \beta (H)$.
\end{theorem}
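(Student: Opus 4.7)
The plan is to reduce the generalized index coding problem on $H$ to a standard digraph index coding problem on $m$ vertices, and then invoke Theorem~\ref{thm:IntroDirectedGen}. Given an instance $H$ with $n$ symbols and $m$ receivers, I will associate to it a digraph $G$ on the vertex set $[m]$ by placing a directed edge $(j,j') \in E(G)$, for $j \neq j'$, whenever $r(j') \in N(j) \cup \{r(j)\}$; informally, an edge from $j$ to $j'$ says that receiver $j$ either knows or wants the symbol required by receiver $j'$. Note that $G$ can be constructed from $H$ in polynomial time.

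I will then show that any clique $C$ of $G$ yields a useful broadcast set $R(C) := \{r(j) : j \in C\} \subseteq [n]$. Indeed, given $j^* \in C$ and any $i \in R(C) \setminus \{r(j^*)\}$, pick some $j \in C$ with $r(j) = i$; since the edge $(j^*, j)$ is present by the clique property, we have $i = r(j) \in N(j^*) \cup \{r(j^*)\}$, and because $i \neq r(j^*)$ it follows that $i \in N(j^*)$. Hence $R(C) \setminus \{r(j^*)\} \subseteq N(j^*)$, so $R(C)$ is useful for every $j^* \in C$. Consequently, any clique cover $C_1, \ldots, C_p$ of $G$ supplies a clique cover index code $\calC = \{R(C_1), \ldots, R(C_p)\}$ for $H$ of length at most $p$.

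The crux of the argument is the inequality $\MAIS(G) \leq \MES(H)$. Suppose $I \subseteq [m]$ is such that $G[I]$ is acyclic and fix a topological ordering $j_1, \ldots, j_k$ of $I$ in which every edge of $G[I]$ goes forward. For any $t < \ell$, the absence of the backward edge $(j_\ell, j_t)$ from $G$ means, by definition of $G$, that $r(j_t) \notin N(j_\ell) \cup \{r(j_\ell)\}$, and in particular $r(j_t) \notin N(j_\ell)$. Reading the sequence in reverse as $j_k, j_{k-1}, \ldots, j_1$, this condition becomes $r(j_{\ell'}) \notin N(j_{t'})$ for every pair of positions $t' < \ell'$ in the reversed order, which is precisely the defining property of an expanding sequence for $H$. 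Therefore $|I| \leq \MES(H)$.

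Putting everything together, applying Theorem~\ref{thm:IntroDirectedGen} to $G$ produces in polynomial time a clique cover of $G$ of size at most $O(\frac{m}{\log m}) \cdot \MAIS(G) \leq O(\frac{m}{\log m}) \cdot \MES(H) \leq O(\frac{m}{\log n}) \cdot \MES(H)$, where the last inequality uses $m \geq n$. The bound $\MES(H) \leq \beta(H)$ from~\cite[Lemma~III.1]{BlasiakKL13} (as quoted in the paper) finishes the statement. The main obstacle is identifying the right digraph $G$: the symmetric condition $r(j') \in N(j) \cup \{r(j)\}$ is what simultaneously (i) makes the clique-to-XOR translation robust to the possibility that distinct receivers in the same clique happen to want the same symbol, forcing the use of $R(C)$ rather than a sum indexed by $C$, and (ii) enables the clean reversal argument that turns an acyclic induced sub-digraph of $G$ into an expanding sequence for $H$.
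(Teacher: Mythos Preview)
Your proof is correct and takes a genuinely different route from the paper's. The paper proceeds iteratively: in each round it selects one representative receiver per currently active symbol, builds a digraph on the active \emph{symbols} (with edges given by the side information of the chosen representatives), applies Theorem~\ref{thm:IntroICdir} to that digraph, and then removes the handled representatives; the rounds are summed using $\sum_i |V_i| \leq m$ together with a threshold $|V_i| \geq \sqrt{n}$ to control the logarithm. You instead perform a single reduction, building one digraph on the \emph{receivers} $[m]$ with the edge rule $(j,j') \in E$ iff $r(j') \in N(j) \cup \{r(j)\}$, and then invoke Theorem~\ref{thm:IntroDirectedGen} once. Your clique-to-broadcast-set map $C \mapsto R(C)$ and your reversal argument showing $\MAIS(G) \leq \MES(H)$ are both clean and correct.

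What each approach buys: your argument is shorter, avoids the iteration and the $\sqrt{n}$ cutoff, and in fact yields the slightly stronger bound $O(\tfrac{m}{\log m}) \cdot \MES(H)$ before relaxing to $O(\tfrac{m}{\log n})$. The paper's approach, by contrast, keeps every auxiliary digraph on at most $n$ vertices, which ties the construction more directly to the symbol set and may be preferable if one later wants to plug in a different clique-cover routine whose guarantee depends on the number of symbols rather than the number of receivers. For the theorem as stated, your single-shot reduction is the more economical proof.
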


\begin{proof}
Let $H$ be an instance of the generalized index coding problem with $n$ symbols and $m$ receivers ($m \geq n$).
As above, for each $j \in [m]$, let $r(j) \in [n]$ denote the index of the symbol required by the $j$th receiver, and let $N(j) \subseteq [n] \setminus \{r(j)\}$ denote the set of indices of the symbols that the $j$th receiver has as side information.

Consider the algorithm that given such an instance $H$ maintains a set $A_i \subseteq [m]$ for each $i \in [n]$, initiated as the set of indices of the receivers that are interested in the $i$th symbol, namely, as $\{ j \in [m] \mid r(j)=i\}$. The $i$th symbol is called active whenever $A_i \neq \emptyset$.
The algorithm chooses for every active symbol $i$ some receiver $j_i \in A_i$ and constructs an instance $G=(V,E)$ of the (standard) index coding problem as follows. The vertex set $V$ of $G$ consists of the indices of the currently active symbols of $H$, and each vertex $i$ has a directed edge to the vertices of $N(j_i) \cap V$, i.e., the indices of the active symbols known to receiver $j_i$. Note that an index code for $G$ satisfies all the receivers $j_i$ with $i \in V$. Our algorithm calls the algorithm given by Theorem~\ref{thm:IntroICdir} on $G$ to obtain a clique cover index code for $G$ of length
\begin{eqnarray}\label{eq:GenIC_Vi}
O \Big (\frac{|V|}{\log |V|} \Big ) \cdot \MAIS(G) \leq O \Big (\frac{|V|}{\log |V|} \Big ) \cdot \MES(H),
\end{eqnarray}
where the inequality holds because the receivers associated with an acyclic induced sub-digraph of $G$, ordered appropriately, form an expanding sequence of $H$.

The algorithm proceeds by removing the receiver $j_i$ from $A_i$ for each $i \in V$ and repeating the process as long as there are at least, say, $\sqrt{n}$ active symbols. Once there are fewer than this number of active symbols, the algorithm constructs the trivial index code that transmits all the active symbols, satisfying all the remaining receivers of $\cup_{i \in [n]}{A_i}$. Note that the length of the latter is smaller than $\sqrt{n}$. Observe that the index code defined as the concatenation of all the index codes constructed by our algorithm is a valid clique cover index code for $H$. Observe further that the algorithm runs in polynomial time, because in each iteration the number of receivers decreases, and because the algorithm from Theorem~\ref{thm:IntroICdir} runs in polynomial time.

It remains to analyze the length of the index code produced by the algorithm.
For an instance $H$ with $n$ symbols and $m$ receivers, let $G_1=(V_1,E_1), \ldots, G_t=(V_t,E_t)$ denote the instances of the (standard) index coding problem constructed throughout the run of the algorithm. By the definition of the algorithm we have $|V_i| \geq \sqrt{n}$ for all $i \in [t]$. Using~\eqref{eq:GenIC_Vi}, this implies that for each $i \in [t]$, the length of the index code produced for $G_i$ is bounded by
\[O \Big (\frac{|V_i|}{\log |V_i|} \Big ) \cdot \MES(H) \leq O \Big (\frac{|V_i|}{\log n} \Big ) \cdot \MES(H).\]
Since the vertices of the digraphs $G_1, \ldots, G_t$ represent distinct receivers of $H$, their vertex sets satisfy $\sum_{i=1}^{t}{|V_i|} \leq m$.
Recalling that the last component of the constructed index code has length smaller than $\sqrt{n}$, we obtain that the total length of the index code produced by the algorithm does not exceed
\[ O \Big (\sum_{i=1}^{t}{\frac{|V_i|}{\log n}} \Big  ) \cdot \MES(H) + \sqrt{n} \leq O \Big (\frac{m}{\log n} \Big ) \cdot \MES(H),\]
where the inequality holds using $m \geq n$. This completes the proof.
\end{proof}

\section*{Acknowledgments}
We thank the anonymous referees for their constructive comments.

\bibliographystyle{abbrv}
\bibliography{approxIC}

\appendix

\section{Proof of Theorem~\ref{thm:BH92}}\label{app:undirected}

We show here that the algorithmic result of Boppana and Halld{\'{o}}rsson~\cite{BoppanaH92}, stated as Theorem~\ref{thm:BH92}, can be derived from our Theorem~\ref{thm:CC_general_Undir}. The proof requires the following lemma, which is given implicitly in~\cite{BoppanaH92} and relies on an argument of~\cite{ErdosS35}.

\begin{lemma}\label{lemma:RamseyU}
Let $Q: \N \times \N \rightarrow \N$ be the function defined by $Q(s,t) = \binom{s+t-2}{s-1}$.
There exists a polynomial-time algorithm that given a graph $G$ on $n$ vertices finds a clique $C$ of $G$ and an independent set $I$ of $G$, such that for all integers $s,t \geq 1$, if $n \geq Q(s,t)$ then either $|C| \geq s$ or $|I| \geq t$.
\end{lemma}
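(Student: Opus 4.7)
The plan is to mimic the recursive construction used for the directed version (Lemma~\ref{lemma:RamseyD}), but in the simpler two-way split appropriate for graphs. Specifically, I would define a recursive algorithm $\Ramsey(G)$ on a graph $G=(V,E)$ that, if $V=\emptyset$, returns $(\emptyset,\emptyset)$, and otherwise picks an arbitrary vertex $u\in V$, sets $V_1=N(u)$ and $V_2=V\setminus (N(u)\cup\{u\})$, recursively computes $(C_1,I_1)=\Ramsey(G[V_1])$ and $(C_2,I_2)=\Ramsey(G[V_2])$, and returns the pair $(C,I)$ where $C$ is the larger of $C_1\cup\{u\}$ and $C_2$, and $I$ is the larger of $I_1$ and $I_2\cup\{u\}$. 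Each recursive call chooses a distinct vertex and partitions the remaining vertices into two disjoint parts, so at most $n$ recursive calls are made, each doing polynomial work to construct $V_1,V_2$ and compare the returned sets; hence the algorithm runs in polynomial time.

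Correctness that $C$ is a clique and $I$ is an independent set of $G$ follows by a straightforward induction on $|V|$. Every vertex of $V_1$ is adjacent to $u$, so $C_1\cup\{u\}$ is a clique whenever $C_1$ is one in $G[V_1]$; no vertex of $V_2$ is adjacent to $u$, so $I_2\cup\{u\}$ is independent whenever $I_2$ is independent in $G[V_2]$. The base case is trivial.

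For the size guarantee, I would follow the template of the directed proof: define $P(s,t)$ as the smallest integer $p$ such that on every graph on at least $p$ vertices the algorithm is guaranteed to return either a clique of size $\ge s$ or an independent set of size $\ge t$. The key recurrence is
\[ P(s,t) \leq P(s-1,t) + P(s,t-1), \]
valid for $s,t\ge 2$, proved by observing that if $|V|\ge P(s-1,t)+P(s,t-1)-1$ then either $|V_1|\ge P(s-1,t)$ or $|V_2|\ge P(s,t-1)$; in the former case the returned pair satisfies $|C_1|\ge s-1$ (so $|C_1\cup\{u\}|\ge s$) or $|I_1|\ge t$, and in the latter case $|C_2|\ge s$ or $|I_2|\ge t-1$ (so $|I_2\cup\{u\}|\ge t$). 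Writing this cleanly with the $-1$ bookkeeping is actually the bound $P(s,t)\le P(s-1,t)+P(s,t-1)$ when one sets up the definition so that attaining $s$ or $t$ equal to $1$ costs nothing (i.e., $P(s,1)=P(1,t)=1$).

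With the recurrence in hand, I would prove $P(s,t)\le Q(s,t)=\binom{s+t-2}{s-1}$ by induction on $s+t$. The base cases $s=1$ or $t=1$ give $P(s,t)=1=\binom{s+t-2}{s-1}$, and the induction step uses Pascal's identity:
\[ P(s,t)\le P(s-1,t)+P(s,t-1)\le \binom{s+t-3}{s-2}+\binom{s+t-3}{s-1}=\binom{s+t-2}{s-1}=Q(s,t). \]
The main obstacle, though a minor one, is setting up the recurrence with the right off-by-one convention so that the Pascal identity lines up exactly with $Q(s,t)$ as stated; once that bookkeeping is fixed, the rest is standard Erd\H{o}s--Szekeres.
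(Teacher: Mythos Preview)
Your proposal is correct and follows essentially the same approach as the paper: the same recursive $\Ramsey$ algorithm, the same inductive correctness argument, the same definition of $P(s,t)$, and the same Pascal-identity induction to bound $P(s,t)$ by $Q(s,t)=\binom{s+t-2}{s-1}$. The only cosmetic difference is that the paper records the slightly sharper recurrence $P(s,t)\le P(s-1,t)+P(s,t-1)-1$ (exactly matching your pigeonhole observation on $|V_1|+|V_2|=|V|-1$), whereas you relax it to $P(s,t)\le P(s-1,t)+P(s,t-1)$; either version suffices for the induction against $Q$.
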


\begin{proof}
We define a recursive algorithm, called $\Ramsey$, that given a graph $G=(V,E)$ returns a pair $(C,I)$ of subsets of $V$. The algorithm is defined as follows.

\begin{enumerate}
  \item\label{itm:11} If $V = \emptyset$, then return the pair $(\emptyset,\emptyset)$.
  \item\label{itm:22} Choose an arbitrary vertex $u \in V$.
  \item\label{itm:33} $(C_1, I_1) \leftarrow \Ramsey(G[V_1])$, where $V_1 = \{ v \in V \mid \{u,v\} \in E\}$.
  \item\label{itm:44} $(C_2, I_2) \leftarrow \Ramsey(G[V_2])$, where $V_2 = V \setminus (V_1 \cup \{u\})$.
  \item\label{itm:55} Let $C$ be a largest set among $C_1 \cup \{u\}$ and $C_2$, let $I$ be a largest set among $I_1$ and $I_2 \cup \{u\}$, and return the pair $(C,I)$.
\end{enumerate}

Observe that for any choice of a vertex $u$ in Step~\ref{itm:22} of the $\Ramsey$ algorithm, the sets $V_1$ and $V_2$ defined in Steps~\ref{itm:33} and~\ref{itm:44} do not include $u$ and are disjoint. This implies that throughout the run of the algorithm on a graph $G$, every recursive call chooses in Step~\ref{itm:22} a distinct vertex of $G$.
It follows that the algorithm can be implemented in polynomial time.

We first prove that on every input graph $G=(V,E)$, the $\Ramsey$ algorithm returns a pair $(C,I)$ such that $C$ is a clique of $G$ and $I$ is an independent set of $G$.
This is shown by induction on the number of vertices of $G$. If $G$ has no vertices, this trivially holds by Step~\ref{itm:11} of the algorithm.
Otherwise, the algorithm chooses in Step~\ref{itm:22} an arbitrary vertex $u \in V$, and for each $i \in [2]$, it calls the $\Ramsey$ algorithm recursively on the subgraph $G[V_i]$ to obtain a pair $(C_i,I_i)$, where $V_1,V_2$ are defined in Steps~\ref{itm:33} and~\ref{itm:44} of the algorithm. By the inductive assumption, for each $i \in [2]$, $C_i$ is a clique of $G[V_i]$, and $I_i$ is an independent set of $G[V_i]$.

By the definition of $V_1$, the vertex $u$ is adjacent to all vertices of $V_1$, hence $C_1 \cup \{u\}$ is a clique of $G$. It thus follows that the set $C$, defined in Step~\ref{itm:55} as a largest set among $C_1 \cup \{u\}$ and $C_2$, is a clique of $G$, as required.
By the definition of $V_2$, the vertex $u$ has no edges to the vertices of $V_2$, hence $I_2 \cup \{u\}$ is an independent set of $G$.
It thus follows that the set $I$, defined in Step~\ref{itm:55} as a largest set among $I_1$ and $I_2  \cup \{u\}$, is an independent set of $G$, as required.

We finally analyze the sizes of the sets returned by the $\Ramsey$ algorithm.
To do so, for every pair of integers $s,t \geq 1$, let $P(s,t)$ denote the smallest integer $p$ such that on every graph on at least $p$ vertices, the $\Ramsey$ algorithm is guaranteed to return a clique of size at least $s$ or an independent set of size at least $t$.
We claim that for all $s,t \geq 2$, it holds that
\begin{eqnarray}\label{eq:Ramsey}
P(s,t) \leq P(s-1,t)+ P(s,t-1)-1.
\end{eqnarray}
To see this, consider a graph $G$ on $p \geq P(s-1,t)+ P(s,t-1)-1$ vertices.
While running on $G$, the $\Ramsey$ algorithm calls itself recursively on two subgraphs of $G$, induced by sets $V_1$ and $V_2$, which satisfy $|V_1|+|V_2|=p-1$. Our assumption on $p$ implies that either $|V_1| \geq P(s-1,t)$ or $|V_2| \geq P(s,t-1)$.
By the definition of $P(s,t)$, the two pairs $(C_1, I_1)$ and $(C_2, I_2)$ returned by the recursive calls satisfy $|C_1| \geq s-1$ or $|I_1| \geq t$ in the first case, and $|C_2| \geq s$ or $|I_2| \geq t-1$ in the second case.
By the definition of the algorithm, it follows that in each of these cases, the $\Ramsey$ algorithm returns on $G$ a clique of size at least $s$ or an independent set of size at least $t$, as required.

It remains to verify that $P(s,t) \leq Q(s,t)$ for all $s,t \geq 1$.
This can be checked by induction on $s+t$.
First, if either $s=1$ or $t=1$, then $P(s,t)= Q(s,t) =1$.
Next, for the induction step, combine~\eqref{eq:Ramsey} with the inductive assumption on $s+t-1$ to obtain that
\begin{eqnarray*}
P(s,t) &\leq& P(s-1,t)+ P(s,t-1)-1 \\
&\leq& Q(s-1,t)+ Q(s,t-1) \\
&=& \binom{s+t-3}{s-2} + \binom{s+t-3}{s-1} \\
&=& \binom{s+t-2}{s-1} = Q(s,t),
\end{eqnarray*}
where the second equality holds by Pascal's identity.
This completes the proof.
\end{proof}

Now, we combine Theorem~\ref{thm:CC_general_Undir} with Lemma~\ref{lemma:RamseyU} to prove Theorem~\ref{thm:BH92}.

\begin{proof}[ of Theorem~\ref{thm:BH92}]
Let $Q: \N \times \N \rightarrow \N$ be the function given in Lemma~\ref{lemma:RamseyU}, defined by $Q(s,t) = \binom{s+t-2}{s-1}$.
It can be shown (see, e.g.,~\cite[Lemma~1]{Erdos67}) that the function $f_Q$ associated with $Q$ in Theorem~\ref{thm:CC_general_Undir} satisfies
\begin{eqnarray}\label{eq:f_Q_U}
f_Q(n) \geq \Omega(\log^2 n).
\end{eqnarray}
Equipped with the algorithm given by Lemma~\ref{lemma:RamseyU}, we apply Theorem~\ref{thm:CC_general_Undir} to the family of all graphs.
It follows that there exists a polynomial-time algorithm that given a graph $G$ on $n$ vertices finds a clique cover of $G$ whose size does not exceed
\[ \Big (\sum_{i=1}^{n}{\frac{1}{f_Q(i)}} \Big ) \cdot \alpha(G) \leq  O \Big (\frac{n}{\log^2 n} \Big ) \cdot \alpha(G),\]
where the above inequality can be verified using~\eqref{eq:f_Q_U}. This completes the proof.
\end{proof}

\end{document}